\numberwithin{equation}{section}
\theoremstyle{plain}
\newtheorem{thm}{Theorem}[section]
\newtheorem{lem}[thm]{Lemma}
\newtheorem{prop}[thm]{Proposition}
\newtheorem{cor}[thm]{Corollary}
\theoremstyle{definition}
\theoremstyle{remark}
\newcommand{\tr}{\text{tr}}
\newcommand{\hf}{\frac{1}{2}}
\begin{document}

\title[Spatially homogeneous solutions of vacuum Einstein equations]{Spatially homogeneous solutions of vacuum Einstein equations in general dimensions}
\author{Yuichiro Sato \and Takanao Tsuyuki}
\date{\today}

\address{Global Education Center, Waseda University, 
Nishi-waseda, 1-6-1, Shinjuku-ku, Tokyo, 169-0805, Japan}

\email{yuichiro-sato@aoni.waseda.jp}

\address{Faculty of Business Administration and Information Science, Hokkaido Information University, 
Nishi-nopporo 59-2, Ebetsu, Hokkaido, 069-8585, Japan}

\email{tsuyuki@do-johodai.ac.jp}

\subjclass[2020]{Primary 83E15; Secondary 53B30}

\keywords{Lorentzian geometry, Einstein equation, Almost abelian Lie algebra}

\begin{abstract}
We study time-dependent compactification of extra dimensions.
We assume that the spacetime is spatially homogeneous, and solve the vacuum Einstein equations without cosmological constant in more than three dimensions. 
We consider globally hyperbolic spacetimes in which almost abelian Lie groups act on the spaces isometrically and simply transitively. 
We give left-invariant metrics on the spaces and solve Ricci-flat conditions of the spacetimes. 
In the four-dimensional case, our solutions correspond to the Bianchi type~II solution.
By our results and previous studies, 
all spatially homogeneous solutions whose spaces have zero-dimensional moduli spaces of left-invariant metrics are found.
For the simplest solution, we show that each of the spatial dimensions cannot expand or contract simultaneously in the late-time limit.
\end{abstract}

\maketitle

\section{Introduction}
Why the number of the observed spatial dimensions is three?
In standard physics, such as the standard model of particle physics and general relativity, there is no reason that the number has to be three.
Indeed, the extra dimensions have been considered in various theories such as Kaluza-Klein \cite{Nordstrom:1914ejq, Kaluza:1921tu,Klein:1926tv,Witten:1981me}, supergravity \cite{Cremmer:1978km}, superstring \cite{Green:2012oqa}, and M-theories \cite{Witten:1995ex} (for recent studies on Einstein equations in 10-dimensional supergravity, see \cite{Tsuyuki:2021xqu,Takeuchi:2023egl}).
The superstring theories and the M-theory are particularly interesting, since they can predict the number of spatial dimensions as nine and ten, respectively.
If the extra dimensions exist but are undetectably small, why they are much smaller than the others? 

We expect that it is the result of the time evolution of the universe.
The evolution can be described by the Einstein equations.
They are difficult to be solved in general, so some assumptions are necessary.
In the previous studies of time-depending extra dimensions, spacetimes have been often assumed to be the products of two spaces of constant curvature \cite{MR898660,Townsend:2003fx,Ohta:2003pu,Russo:2019fnk}. 
There are only two scale factors in such cases.
The advantage of this assumption is that the Einstein equations become simple.
The disadvantage is that the setup seems special and artificial.
We do not make such an assumption in this paper; the scale factors of each direction can evolve differently.
Instead, we assume that the spacetime is spatially homogeneous.
This assumption reduces Einstein equations into ordinary differential equations (ODEs). 
The observed universe is homogeneous at a sufficiently large scale, so we expect that the assumption is natural as a first approximation.

Spatially homogeneous but anisotropic spacetimes in four dimensions are known as the Bianchi spacetimes. 
A Bianchi spacetime is a four-dimensional Lorentzian 
manifold that has a spacelike hypersurface on which a Lie group simply transitively acts. 
In particular, it is expressed as a product $I\times G$ of an open interval $I$, which describes the time, and a connected Lie group $G$, which describes the space. 
Three-dimensional Lie groups are classified into types I to IX.
Types I to VII are solvable (see Table.~\ref{tBianchi}), while types VIII and IX are semisimple. 
In addition, in the three-dimensional case, a Lie group is almost abelian if and only if it is solvable \cite{MR2530885}.

In this study, we consider vacuum Einstein equations without the cosmological constant.
In the four-dimensional case, general vacuum solutions have been found only for the types I \cite{MR1506447}, II \cite{Taub:1950ez} and V \cite{MR191573} (see also \cite{MR2003646}).
These types correspond to $G=\mathbb{R}^3$ (abelian group), $H_3$ (Heisenberg group) and $\mathbb{R}\mathrm{H}^3$ (hyperbolic space), respectively. 
The higher dimensional generalization of type I ($G=\mathbb{R}^{n}$) and V ($G=\mathbb{R}\mathrm{H}^{n}$) have been done in \cite{Chodos:1979vk} and \cite{Demaret:1985pt}.
Ref.~\cite{Demaret:1985pt} also studies other higher-dimensional vacuum solutions including Bianchi type III and VI, but metrics are assumed to be diagonal.
Higher dimensional generalizations of the Bianchi type IX were also studied in \cite{Ishihara:1985eu}.
The main purpose of this paper is to derive the vacuum solution for the higher dimensional extension of Bianchi type II.
However, the way of the extension is not unique, so we consider the moduli space $\mathfrak{PM}$ of left-invariant metrics.

It is known that $\dim \mathfrak{PM}$ is zero if and only if the Lie group is $\mathbb{R}^3$, $H_3$, or $\mathbb{R}\mathrm{H}^3$ \cite{MR2783396,MR1958155}. 
Accidentally or not, these groups correspond to those of the Bianchi types that vacuum solutions are known.
In the higher $n$-dimensional case ($n\ge 4$), 
we already have the classification of simply-connected Lie groups 
whose moduli spaces are zero-dimensional \cite{MR1958155}.
These are given by $\mathbb{R}^{n}$, $H_{3} \times \mathbb{R}^{n-3}$, $\mathbb{R}\mathrm{H}^{n}$
and these are all almost abelian (see Theorem~\ref{classification}).
For this reason, we study the case with $G=H_{3} \times \mathbb{R}^{n-3}$ 

We solve the vacuum Einstein equations for $G=H_{3} \times \mathbb{R}^{n-3}$. As a result, the Ricci-flat metric is:
\begin{equation}
\begin{split}
ds^2 =& \cosh(kt)\left(-c_0^2 e^{2(\tr h)t}dt^2+c_2^2e^{2h_2 t}dx_2^2
+c_n^2e^{2h_n t}dx_n^2\right)\\
&+\frac{c_1^2e^{2h_1 t}}{\cosh (kt)}\left(dx_1-x_n dx_2\right)^2
+\sum_{i=3}^{n-1}c_i^2e^{2h_i t}dx_i^2,\\
k=& \frac{c_0 c_1}{c_2 c_n},\ 2h_1 =-\sum_{i=3}^{n-1}h_i,\ (\tr h)^2-(\tr h^2) = \frac{k^2}{2},
\end{split}
\label{eq:ds2}
\end{equation}
where $c_i$ are positive constants and $h_i$ are real constants.
In the $n=3$ case, $\sum_{i=3}^2$ terms just mean zero.
We can let $c_i=1$ by the isometry and rescaling, but we leave it to show that
this solution includes those previously found in $(3+1)$ dimension \cite{Taub:1950ez} and in $(4+1)$ dimension \cite{Christodoulakis:2003wy}. The $(4+1)$-dimensional case was also studied in \cite{Halpern:2002vd}. 
The scale factors $a_i$ of our solution (\ref{eq:ds2}) include their result, but the lapse function $N$ is slightly different.
We find that the method of deriving the above metric can be applied to different Lie algebra cases, and the other new solutions are given in (\ref{eq:ds2c}) and (\ref{eq:ds2e}).


We also prove that the metrics can be diagonalized for the cases $G=\mathbb{R}^{n}$, $H_{3} \times \mathbb{R}^{n-3}$, and $\mathbb{R}\mathrm{H}^{n}$.
Thus, the theorem below is proved.
\begin{thm}
    Except for the Minkowski spacetimes, 
    all general solutions of vacuum Einstein equations for the spatially homogeneous spacetimes $I\times G$, where $G$ has zero-dimensional moduli space of left-invariant Riemannian metrics, are \eqref{eq:ds2}, \eqref{eq:kas}, \eqref{eq:solV} and \eqref{eq:mil}.
\end{thm}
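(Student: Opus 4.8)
The plan is to obtain the statement as a synthesis of three ingredients: the classification of the admissible groups, a diagonalization lemma, and the explicit integration of the Einstein equations carried out case by case. First I would apply Theorem~\ref{classification}, which asserts that the only $n$-dimensional simply connected Lie groups with zero-dimensional moduli space of left-invariant Riemannian metrics are $\mathbb{R}^{n}$, $H_{3}\times\mathbb{R}^{n-3}$, and $\mathbb{R}\mathrm{H}^{n}$, all almost abelian. This collapses the problem into exactly these three families of spacetimes $I\times G$, and it remains to exhibit the general Ricci-flat metric for each.

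For a fixed $G$, a spatially homogeneous Lorentzian metric is encoded by a lapse $N(t)$ together with a one-parameter family of left-invariant spatial metrics, i.e.\ a curve $g_{ij}(t)$ of positive-definite symmetric matrices expressed in a left-invariant coframe $\{\theta^{i}\}$ adapted to the almost abelian structure. The second step is to reduce the $\tfrac{1}{2}n(n+1)$ spatial components to diagonal form. This is where the zero-dimensionality of the moduli space enters: modulo the overall scale, the automorphism group of the Lie algebra acts with open orbits on the cone of left-invariant inner products, so at a generic instant the spatial metric can be brought to a standard diagonal form. I would use a single constant automorphism, supplemented by the freedom in the choice of coframe, to diagonalize the data at one time, and then read off from the off-diagonal components of $R_{\mu\nu}=0$ that diagonality is propagated in $t$. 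The outcome is the diagonalization lemma quoted just before the theorem, and it is precisely what guarantees that the families below are the \emph{general} solutions rather than a diagonal ansatz.

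With $g_{ij}(t)$ diagonal in the adapted coframe, the vacuum condition $R_{\mu\nu}=0$ becomes a closed system of ordinary differential equations for the scale factors and the lapse. I would integrate this system separately in each case. For $G=\mathbb{R}^{n}$ the constraint and evolution equations yield the Kasner family \eqref{eq:kas}; for $G=H_{3}\times\mathbb{R}^{n-3}$ the integration is the central computation of the paper and produces \eqref{eq:ds2}; and for $G=\mathbb{R}\mathrm{H}^{n}$ one obtains \eqref{eq:solV} together with the remaining branch \eqref{eq:mil}. Collecting the four families and discarding the flat degenerate members as the excluded Minkowski spacetimes completes the argument.

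The step I expect to be the main obstacle is the diagonalization. One must check that the off-diagonal metric components can be eliminated by a time-\emph{independent} automorphism, since a $t$-dependent frame change would feed back into the system through $\dot{g}_{ij}$ and could reintroduce cross terms, and one must verify that the remaining off-diagonal Einstein equations are genuinely satisfied by the diagonal reduction. The Heisenberg factor is the delicate case, because its nonvanishing structure constant prevents the full removal of the coupling associated with $\theta^{1}$; here the task is to confirm that the only surviving non-coordinate-diagonal term is the unavoidable $(dx_{1}-x_{n}\,dx_{2})^{2}$ of \eqref{eq:ds2}, so that no genuinely nondiagonal family of vacuum solutions is overlooked.
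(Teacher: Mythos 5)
Your overall architecture coincides with the paper's: Theorem~\ref{classification} reduces the problem to $G=\mathbb{R}^{n}$, $H_{3}\times\mathbb{R}^{n-3}$, $\mathbb{R}\mathrm{H}^{n}$; a diagonalization theorem (Theorem~\ref{diag} in the paper) upgrades the diagonal ansatz to generality; and the three cases are integrated separately, quoting \eqref{eq:kas} for $\mathbb{R}^{n}$, \eqref{eq:solV} and \eqref{eq:mil} for $\mathbb{R}\mathrm{H}^{n}$, and deriving \eqref{eq:ds2} for $H_{3}\times\mathbb{R}^{n-3}$. So this is the same route, not a different one.

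There is, however, a genuine gap in your diagonalization step as stated. Diagonalizing only the spatial metric at one instant $t_{0}$ and then claiming that ``the off-diagonal components of $R_{\mu\nu}=0$'' propagate diagonality does not work: the field equations are second order in $t$, so a metric that is diagonal at $t_{0}$ but has a non-diagonal velocity $\dot{g}_{ij}(t_{0})$ immediately develops off-diagonal components, and nothing in the off-diagonal equations forbids this. The missing idea, which is the heart of the paper's proof of Theorem~\ref{diag}, is to diagonalize the \emph{full second-order initial data}: a constant automorphism first brings $g_{ij}(t_{0})$ to diagonal form (for zero-dimensional moduli spaces one in fact gets a scalar matrix $\lambda^{2}\delta_{ij}$), and then a second constant linear transformation --- in the paper $P=D^{-1}QD$ with $Q$ orthogonal, chosen to diagonalize the symmetric matrix $D^{-1}\dot{C}(t_{0})D^{-1}$ --- preserves the diagonal form of $g(t_{0})$ while diagonalizing $\dot{g}_{ij}(t_{0})$. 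This is exactly where zero-dimensionality pays off beyond giving ``open orbits'': the scalar form of $g(t_{0})$ leaves the whole orthogonal group available for the second step. With both $g(t_{0})$ and $\dot{g}(t_{0})$ diagonal, the conclusion then follows not by reading off the off-diagonal equations but by uniqueness of the ODE initial value problem, combined with the consistency check you correctly flagged (that the off-diagonal Ricci components vanish identically on diagonal data for these three algebras, so a diagonal solution with the same initial data exists and must coincide with the given one). Your phrase ``supplemented by the freedom in the choice of coframe'' may have been intended to cover this, but as written the proposal omits precisely the half of the initial data whose diagonalization makes the propagation argument valid.
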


As for the solution \eqref{eq:ds2}, we prove that all of the dimensions cannot expand or contract simultaneously. 
Thus, some spatial dimensions can be much smaller than the other dimensions, even though we did not assume anisotropic energy-momentum tensors.
We also see that the solutions that only three dimensions expand exist. 
However, the number of the expanding dimensions is not limited to three.
The question mentioned at the beginning of this paper is still unanswered.

This paper is organized as follows. 
In Section~\ref{s:pre}, we introduce higher dimensional generalizations of Bianchi spacetimes and give their vacuum Einstein equations explicitly.
In Section~\ref{s:dim}, $\dim \mathfrak{PM} = 0$ solutions are enumerated. 
In particular, the new solution (\ref{eq:ds2}) is derived. 
In Section~\ref{sdiag}, we discuss the diagonalizability of metrics, 
and we show that metrics can be diagonalized for these solutions. 
In Section~\ref{s:exp}, we discuss the behavior of the solution (\ref{eq:ds2}). 
In Appendix~\ref{s:oth}, we show other new solutions of the vacuum Einstein equations that are also generalizations of Bianchi type II.

\begin{table}[tb]
\centering 
\caption{Three-dimensional almost abelian Lie groups classified by Bianchi types.
The rest of the Bianchi types, VIII and IX, are not almost abelian.
The associated matrices $A$ express the non-zero structure constants as \eqref{Lie_alg_str}.
The `Solutions' row shows references for spatially-homogeneous general solutions of vacuum ($n+1$)-dimensional spacetimes.
Higher-dimensional generalization (defined by $G=H_{3} \times \mathbb{R}^{n-3}$) of the vacuum solution for Type II is done in this paper.
Bianchi type VI and VII contain a parameter $c\in \mathbb{R}$.
}
\begin{tabular}{ccccccccc}
\hline\hline
&&I&II&III&IV&V&VI&VII\\  \hline 
\multicolumn{2}{c}{\rule[-5mm]{0mm}{12mm} $A$ in \eqref{Lie_alg_str}} & 
$
\begin{bmatrix}
    0&0\\0&0
\end{bmatrix}$
&
$
\begin{bmatrix}
    0&1\\0&0
\end{bmatrix}$
&$
\begin{bmatrix}
    1&1\\1&1
\end{bmatrix}$
&$
\begin{bmatrix}
    1&1\\0&1
\end{bmatrix}$
&$
\begin{bmatrix}
    1&0\\0&1
\end{bmatrix}$
&$
\begin{bmatrix}
    c&1\\1&c
\end{bmatrix}$
&$
\begin{bmatrix}
    c&-1\\1&c
\end{bmatrix}$ \\ 
\hline
\multirow{2}{*}{Solutions } &$n=3$  & \cite{MR1506447} &
\cite{Taub:1950ez} & & & \cite{MR191573} \\
&$n\geq 4$
 & \cite{Chodos:1979vk} &  \eqref{eq:ds2} & && \cite{Demaret:1985pt} \\
 \hline\hline
\end{tabular}
\label{tBianchi}
\end{table}

\section{Preliminaries}
\label{s:pre}
\subsection{Spatially homogeneous spacetimes}
Let $(M,g_{M})$ be an $(n+1)$-dimensional Lorentzian manifold, where $g_{M}$~(or $ds^{2}$) denotes a Lorentzian metric on $M$.
We call $(M,g_{M})$ \textit{time-orientable} if there exists a global vector field $V$ on $M$ such that
\begin{equation*}
    g_{M}(V,V)<0.
\end{equation*}
In this paper, when $(M,g_{M})$ is connected and time-orientable, we call it an $(n+1)$-dimensional \textit{spacetime}.

Let $(M,g_{M})$ be a spacetime, $\Lambda$ a constant and $T$ a $(0,2)$-type divergence-free symmetric tensor field on $M$.
Then the following equation
\begin{equation}
\textrm{Ric} - \frac{\textrm{Scal}}{2} g_{M} + \Lambda g_{M} = T \label{Ein_eq}
\end{equation}
is called the \textit{Einstein equation}.
Moreover, $\Lambda$ and $T$ are called the \textit{cosmological constant} and the \textit{energy-momentum tensor}, respectively.
In particular, when $T=0,\ \Lambda=0$, the equation (\ref{Ein_eq}) is called the \textit{vacuum} Einstein equation, then $(M,g_{M})$ is called a \textit{vacuum solution}.
The spacetime $(M,g_{M})$ is a vacuum solution if and only if it is Ricci-flat.

Let $I \subset \mathbb{R}$ be an open interval. 
A spacetime $(M,g_{M})$ is \textit{globally hyperbolic} if there exist an $n$-dimensional manifold $\Sigma$ and a positive function $N$ on $M$ such that $(M,g_{M})$ is isometric to
\begin{equation}
\left(I \times \Sigma, -N^2 dt^2 + h_{t}\right), \label{glob_hyp_s_t}
\end{equation}
where $t$ denotes the coordinate of $I$, and $h_{t}$ denotes a Riemannian metric on $\Sigma$ for all $t \in I$.
Moreover, we call $\Sigma_{t} := \{t\} \times \Sigma$ a \textit{Cauchy hypersurface} for each $t \in I$. 

A globally hyperbolic spacetime (\ref{glob_hyp_s_t})
is \textit{spatially homogeneous} if for each $t_{0} \in I$ and any two points $x, y \in \Sigma_{t_{0}}$ there exists an isometry $f$ of $(M,g_{M})$ 
such that $f(t,x)=(t,y)$ for all $t \in I$. 
Also, a globally hyperbolic spacetime (\ref{glob_hyp_s_t}) is \textit{spatially isotropic} 
if for each $(t_{0},x) \in I \times \Sigma$ 
and any two spacelike vectors $v, w \in T_{x}\Sigma_{t_{0}}$ 
satisfying $g_{M}(v,v) = g_{M}(w, w)$ 
there exists an isometry $f$ of $(M,g_{M})$ such that 
$f(t,x)=(t,x)$ for all $t \in I$ and $f_{\ast}(v) = w$.
By definition, when a globally hyperbolic spacetime is spatially homogeneous, or spatially isotropic, then the function $N$ is a positive function on $I$, which is called the \textit{lapse function} of the spacetime. 

Unless otherwise noted, 
we deal with the case that the Cauchy hypersurface is a connected Lie group $G$, 
that is, 
\begin{equation*}
    (M,g_{M}) = \left(I \times G, -N^2dt^2 + g_{G}(t)\right), 
\end{equation*}
where $N$ is a positive function on $I$ and $g_{G}(t)$ is left-invariant Riemannian metric on $G$ for all $t \in I$.
By definition, the globally hyperbolic spacetime is 
spatially homogeneous but not necessarily spatially isotropic.

Let $G$ be a connected Lie group, $\mathfrak{g}$ its Lie algebra of $G$, 
$\{X_{1},\ldots,X_{n}\}$ a set of generators of $\mathfrak{g}$.
Then we define the \textit{structure constants} $C_{ij}^{k}$ of $\mathfrak{g}$ as
\begin{equation*}
[X_{i}, X_{j}] = \sum_{k=1}^{n} C_{ij}^{k} X_{k}.
\end{equation*}
In addition, we define the left-invariant $1$-forms $\{\omega^{1}, \ldots, \omega^{n}\}$ on $G$ as $\omega^{i}(X_{j}) = \delta^{i}_{j}$.
Then a left-invariant Riemannian metric $g_{G}$ on $G$ can be expressed as 
\begin{equation*}
    g_{G}=\sum_{1 \leq i,j \leq n} s_{ij} \, \omega^{i} \otimes \omega^{j}
\end{equation*}
by using a positive-definite symmetric matrix $\left[s_{ij}\right]_{1 \leq i,j \leq n}$. 
Therefore, considering a one-parameter family of metrics $g_{G}(t)$ on the spatially homogeneous spacetime is equivalent to choosing a smooth curve 
\begin{equation}
    C : I \ni t \mapsto C(t) = \left[ s_{ij}(t) \right] \in \textrm{Sym}_{n}^{+}\mathbb{R}, \label{met_curv}
\end{equation}
where $\textrm{Sym}_{n}^{+}\mathbb{R}$ denotes the set of $n \times n$ positive-definite symmetric matrices.

In Section~\ref{sdiag}, we show that the metrics for the Lie groups considered in this paper can be diagonalized. 
Let $N, a_{1}, \ldots, a_{n}$ be positive functions on an open interval $I \subset \mathbb{R}$.
Consider a spatially homogeneous spacetime with the diagonal Lorentzian metric 
\begin{equation*}
    (M,g_{M}) = \left(I \times G, -N^2dt^2+a_{1}^{2}(\omega^{1})^{2} + \cdots + a_{n}^{2}(\omega^{n})^{2}\right), 
\end{equation*}
where $a_{1}, \ldots, a_{n}$ are called \textit{scale factors}. 
When we give global vector fields on $M$ as 
\begin{equation*}
X_{0} = \frac{d}{dt}, \ X_{1}, X_{2}, \ldots, X_{n},
\end{equation*}
and set 
\begin{equation}
e_{0} = \frac{1}{N}X_{0}, \ e_{1} = \frac{1}{a_{1}}X_{1}, \ 
e_{2} = \frac{1}{a_{2}}X_{2}, \ldots, e_{n} = \frac{1}{a_{n}}X_{n}, \label{ONF}
\end{equation}
then $\{e_{0}, e_{1}, e_{2}, \ldots, e_{n} \}$ are global orthonormal frame fields of $(M, g_{M})$, that is, it holds that 
\begin{equation*}
g_{M}(e_{0}, e_{0}) = -1, \quad g_{M}(e_{i}, e_{i}) = 1 \quad (1 \leq i \leq n), \quad 
\textrm{otherwise} = 0. 
\end{equation*}
Let $\nabla$ be the Levi-Civita connection of $(M, g_{M})$.
By direct computation using the Koszul formula
\begin{align*}
2g_{M}(\nabla_{X}Y, Z) 
&= X(g_{M}(Y, Z)) + Y(g_{M}(Z, X)) - Z(g_{M}(X, Y)) \\
& \qquad - g_{M}(X, [Y, Z]) + g_{M}(Y, [Z, X]) + g_{M}(Z, [X, Y]),
\end{align*}
we have the following.

\begin{prop} \label{LC_conn} 
Let $1 \leq i, j \leq n$. Then 
\begin{align*}
\nabla_{X_{0}}X_{0} &= X_{0}(\log{N})X_{0}, \quad 
\nabla_{X_{0}} X_{i} = \nabla_{X_{i}} X_{0} = X_{0}(\log{a_{i}})X_{i}, \\
\nabla_{X_{i}} X_{j} &= \frac{1}{2} \frac{X_{0}(a_{i}^{2})}{N^{2}} \delta_{ij} X_{0} 
+ \frac{1}{2} \sum_{k=1}^{n} \frac{C_{kj}^{i}a_{i}^{2} + C_{ki}^{j}a_{j}^{2} + C_{ij}^{k}a_{k}^{2}}{a_{k}^{2}}X_{k}.
\end{align*}
\end{prop}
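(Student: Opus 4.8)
The plan is to evaluate the Koszul formula on the adapted frame $\{X_0, X_1, \ldots, X_n\}$ and read off each component of $\nabla_X Y$ by pairing against every frame vector. Two structural inputs make every term explicit. First, the only nonvanishing metric pairings are $g_M(X_0,X_0) = -N^2$ and $g_M(X_i,X_j) = a_i^2\delta_{ij}$, and these depend on $t$ alone; hence a spatial field annihilates them, $X_i(g_M(\cdot,\cdot)) = 0$, while $X_0$ differentiates them in $t$. Second, because $X_0 = d/dt$ commutes with the left-invariant fields on $G$, we have $[X_0, X_i] = 0$, whereas $[X_i, X_j] = \sum_k C_{ij}^k X_k$. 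Throughout, once the quantity $2g_M(\nabla_X Y, Z)$ is known for every $Z$, the components are recovered by dividing it by $-2N^2$ when $Z = X_0$ and by $2a_k^2$ when $Z = X_k$.

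I would then dispatch the three cases in turn. For $\nabla_{X_0}X_0$, every pairing against a spatial $X_k$ vanishes (all brackets carrying $X_0$ are zero, and $X_k$ kills $g_M(X_0,X_0)$), while the $X_0$ pairing leaves only $X_0(g_M(X_0,X_0)) = -X_0(N^2)$, which yields the coefficient $X_0(\log N)$. For the mixed covariant derivatives, torsion-freeness already forces $\nabla_{X_0}X_i = \nabla_{X_i}X_0$ via $[X_0,X_i] = 0$, so only one pairing computation is needed: against $X_k$ the sole survivor is $X_0(g_M(X_i,X_k)) = \delta_{ik}X_0(a_i^2)$, which gives $X_0(\log a_i)X_i$, and the $X_0$ component vanishes.

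The substantive case is $\nabla_{X_i}X_j$. Pairing with $X_0$ removes all brackets and leaves the single derivative $-X_0(g_M(X_i,X_j)) = -\delta_{ij}X_0(a_i^2)$, which produces the $\tfrac{1}{2}X_0(a_i^2)N^{-2}\delta_{ij}$ term. Pairing with a spatial $X_k$ makes all derivatives vanish, leaving precisely the three bracket terms $-g_M(X_i,[X_j,X_k]) + g_M(X_j,[X_k,X_i]) + g_M(X_k,[X_i,X_j])$, which evaluate to $-C_{jk}^i a_i^2 + C_{ki}^j a_j^2 + C_{ij}^k a_k^2$; dividing by $2a_k^2$ gives the stated summand.

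The only real obstacle is cosmetic rather than conceptual: the raw Koszul output carries $-C_{jk}^i$ in the first structure-constant term, and matching the displayed form requires the antisymmetry $C_{jk}^i = -C_{kj}^i$ to rewrite it as $C_{kj}^i$. Beyond this single sign normalization, the proposition reduces entirely to the routine frame-by-frame Koszul evaluations above.
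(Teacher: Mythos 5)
Your proposal is correct and follows exactly the route the paper takes: the paper's proof is precisely a direct evaluation of the displayed Koszul formula on the frame $\{X_0, X_1, \ldots, X_n\}$, using that the metric components depend only on $t$, that $[X_0, X_i] = 0$, and that $[X_i, X_j] = \sum_k C_{ij}^k X_k$. Your frame-by-frame pairing computations, including the extraction of components via $-2N^2$ and $2a_k^2$ and the final rewriting $-C_{jk}^i = C_{kj}^i$ by antisymmetry, reproduce the paper's (unwritten) calculation faithfully.
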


Here we define the $(0,4)$-type Riemannian curvature tensor $R$ as 
\begin{equation*}
R(X,Y,Z,W) = g_{M}(R(X,Y)Z,W) 
= g_{M}(([\nabla_{X}, \nabla_{Y}] - \nabla_{[X, Y]})Z, W).
\end{equation*}
Then we obtain the following Lemma~\ref{Riem_curv} through calculations using Proposition~\ref{LC_conn} and the symmetry of the structure constants 
\begin{equation*}
C_{ij}^{k} = -C_{ji}^{k}, \quad 
\sum_{m=1}^{n} (C_{mi}^{l}C_{jk}^{m} + C_{mj}^{l}C_{ki}^{m} + C_{mk}^{l}C_{ij}^{m}) = 0.
\end{equation*}

\begin{lem} \label{Riem_curv} 
Let $1 \leq i, j, k \leq n$. Then 
\begin{align*}
R(X_{0},X_{i},X_{j},X_{0}) =& \left[X_{0}(\log{N})X_{0}(\log{a_{i}}) - X_{0}^{2}(\log{a_{i}}) - \{X_{0}(\log{a_{i}}) \}^{2} \right]a_{i}^{2} \delta_{ij}, \\
R(X_{k},X_{i},X_{0},X_{k}) =& 
\left[X_{0}(\log{a_{k}}) - X_{0}(\log{a_{i}}) \right] C_{ik}^{k}a_{k}^{2}, \\
R(X_{k},X_{i},X_{j},X_{k}) =& \frac{1}{2N^{2}}
\left[X_{0}(a_{i}^{2})X_{0}(\log{a_{k}})a_{k}^{2}\delta_{ij} - 
X_{0}(a_{k}^{2})X_{0}(\log{a_{i}})a_{i}^{2}\delta_{ik}\delta_{kj} \right] \\
& + \frac{1}{4}\sum_{l=1}^{n}\left[\frac{2C_{kl}^{k}a_{k}^{2}(C_{lj}^{i}a_{i}^{2}+C_{li}^{j}a_{j}^{2})}{a_{l}^{2}} + \frac{(C_{kl}^{i}a_{i}^{2}+C_{il}^{k}a_{k}^{2})(C_{kl}^{j}a_{j}^{2}+C_{jl}^{k}a_{k}^{2})}{a_{l}^{2}} \right.\\ 
& \left.  \qquad \qquad 
+ \, C_{kl}^{i}C_{jk}^{l}a_{i}^{2} + C_{kl}^{j}C_{ik}^{l}a_{j}^{2} 
+ (C_{il}^{k}C_{kj}^{l} + C_{jl}^{k}C_{ki}^{l})a_{k}^{2} 
- 3C_{ki}^{l}C_{kj}^{l}a_{l}^{2} \right],
\end{align*}
where for a function $f$ on $I$
\begin{equation*}
X_{0}^{2}f := X_{0}(X_{0}f) = \frac{d^{2}\!f}{dt^{2}}. 
\end{equation*}
\end{lem}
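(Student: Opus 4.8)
The plan is to compute all three families of curvature components directly from the definition
\begin{equation*}
R(X,Y)Z = \nabla_{X}\nabla_{Y}Z - \nabla_{Y}\nabla_{X}Z - \nabla_{[X,Y]}Z,
\end{equation*}
substituting the connection coefficients of Proposition~\ref{LC_conn} and then pairing with the appropriate frame field. Three elementary facts drive every simplification. First, since $X_{0}=d/dt$ and the $X_{i}$ are $t$-independent left-invariant fields, one has $[X_{0},X_{i}]=0$ and $X_{i}(f)=0$ for any function $f=f(t)$; in particular $\nabla_{X_{i}}$ passes through scalar factors such as $X_{0}(\log a_{j})$. Second, $[X_{k},X_{i}]=\sum_{m}C_{ki}^{m}X_{m}$, so that $\nabla_{[X_{k},X_{i}]}Z=\sum_{m}C_{ki}^{m}\nabla_{X_{m}}Z$. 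Third, the metric is diagonal in this frame, $g_{M}(X_{0},X_{0})=-N^{2}$, $g_{M}(X_{i},X_{j})=a_{i}^{2}\delta_{ij}$ and $g_{M}(X_{0},X_{i})=0$, so only matching components survive the final inner product.

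For $R(X_{0},X_{i},X_{j},X_{0})$ the bracket term vanishes, and I would expand $\nabla_{X_{0}}\nabla_{X_{i}}X_{j}-\nabla_{X_{i}}\nabla_{X_{0}}X_{j}$ using Proposition~\ref{LC_conn}. Because $X_{0}(\log a_{j})$ depends only on $t$, pairing with $X_{0}$ kills the spatial components and leaves only the $X_{0}$-component; collecting the resulting $t$-derivatives of $\log N$ and $\log a_{i}$ yields the stated diagonal expression. For the mixed component $R(X_{k},X_{i},X_{0},X_{k})$ the same expansion is carried out, now retaining the spatial $X_{k}$-component. One finds that $\nabla_{X_{i}}\nabla_{X_{k}}X_{0}$ contributes nothing after pairing, and the difference $X_{0}(\log a_{k})-X_{0}(\log a_{i})$ arises from combining the derivative term $\nabla_{X_{k}}X_{i}$ with the bracket contribution $\nabla_{[X_{k},X_{i}]}X_{0}$, both carrying a single structure constant of the form $C_{ik}^{k}$.

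The genuinely laborious case is the purely spatial component $R(X_{k},X_{i},X_{j},X_{k})$. Here I would substitute the full spatial formula for $\nabla_{X_{i}}X_{j}$ into both $\nabla_{X_{k}}\nabla_{X_{i}}X_{j}$ and $\nabla_{X_{i}}\nabla_{X_{k}}X_{j}$, and likewise expand $\sum_{m}C_{ki}^{m}\nabla_{X_{m}}X_{j}$. Pairing with $X_{k}$ splits the result into a \emph{time part}, whose $X_{0}$-components reproduce the $\tfrac{1}{2N^{2}}$ bracket containing $X_{0}(a_{i}^{2})$ and $X_{0}(a_{k}^{2})$, and a \emph{structure-constant part}, a quadruple sum of products $C\cdot C$ weighted by ratios of the form $a^{2}/a^{2}$. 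The main obstacle is the bookkeeping of this last sum: the raw expansion produces more terms than appear in the statement, and one must repeatedly apply the antisymmetry $C_{ij}^{k}=-C_{ji}^{k}$ together with the Jacobi identity $\sum_{m}(C_{mi}^{l}C_{jk}^{m}+C_{mj}^{l}C_{ki}^{m}+C_{mk}^{l}C_{ij}^{m})=0$ to cancel the cubic cross-terms and regroup what remains into the four displayed families. Once every product is sorted by its $a$-weight, matching coefficients against the asserted formula completes the proof, with the algebraic symmetries of the Riemann tensor serving as a consistency check on the final expression.
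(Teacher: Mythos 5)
Your proposal is correct and follows essentially the same route as the paper: the paper likewise obtains Lemma~\ref{Riem_curv} by direct computation of $g_{M}(([\nabla_{X},\nabla_{Y}]-\nabla_{[X,Y]})Z,W)$ from the connection formulas of Proposition~\ref{LC_conn}, simplified with the antisymmetry $C_{ij}^{k}=-C_{ji}^{k}$ and the Jacobi identity, which is exactly your scheme (and your specific intermediate claims, e.g.\ that $[X_{0},X_{i}]=0$ kills the bracket term in the first component and that $g_{M}(\nabla_{X_{i}}\nabla_{X_{k}}X_{0},X_{k})$ vanishes since its $X_{k}$-coefficient is $\tfrac{1}{2}(C_{ki}^{k}+C_{ik}^{k})=0$, check out). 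The only nitpick is terminological: the cross-terms handled by the Jacobi identity are quadratic, not cubic, in the structure constants.
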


Nextly we define the Ricci tensor $\mathrm{Ric}$ with respect to $R$ 
of the spatially homogeneous spacetime $(M, g_{M})$ as
\begin{equation*}
\textrm{Ric}(X,Y) = - R(e_{0},X,Y,e_{0}) + \sum_{i=1}^{n} R(e_{i},X,Y,e_{i}), 
\end{equation*}
where $\{e_{0}, e_{1}, \ldots, e_{n}\}$ denotes an orthonormal frame field and $X$ and $Y$ are vector fields on $M$. 
Then we can determine all components of the Ricci tensor by using the orthonormal frame fields (\ref{ONF}) and Lemma~\ref{Riem_curv}.

\begin{prop} \label{Ricci_curv}  
Let $1 \leq i, j \leq n$. Then 
\begin{align*}
\mathrm{Ric}(X_{0},X_{0}) 
=& \sum_{k=1}^{n} \left(\frac{\dot{N}}{N}\frac{\dot{a}_{k}}{a_{k}} -  \frac{\ddot{a}_{k}}{a_{k}}\right), \quad 
\mathrm{Ric}(X_{0},X_{i}) = \frac{d}{dt} 
\log{\left[ \, \prod_{k=1}^{n} \left(\frac{a_{k}}{a_{i}} \right)^{C_{ik}^{k}}\right]}, \\ 
\mathrm{Ric}(X_{i},X_{j}) =& \frac{a_{i}^{2}}{N^{2}}\left[ \left(\sum_{\substack{k=1\\k \neq i}}^{n} \frac{\dot{a}_{k}}{a_{k}} - \frac{\dot{N}}{N} \right)\frac{\dot{a}_{i}}{a_{i}} + \frac{\ddot{a}_{i}}{a_{i}} \right]\delta_{ij} \\
&+ \frac{1}{4} \sum_{k,l=1}^{n}\left[ 2C_{il}^{k}C_{kj}^{l}+ 2\frac{C_{lk}^{k}(C_{jl}^{i}a_{i}^{2} + C_{il}^{j}a_{j}^{2})}{a_{l}^{2}} \right.\\
&+\left.  \frac{(C_{kl}^{i}a_{i}^{2}+C_{il}^{k}a_{k}^{2})(C_{kl}^{j}a_{j}^{2}+C_{jl}^{k}a_{k}^{2})}{a_{k}^{2}a_{l}^{2}} 
+ 
\frac{C_{kl}^{i}C_{jk}^{l}a_{i}^{2} + C_{kl}^{j}C_{ik}^{l}a_{j}^{2} -3C_{ki}^{l}C_{kj}^{l}a_{l}^{2}}{a_{k}^{2}}
\right],
\end{align*}
where for a function $f$ on $I$
\begin{equation*}
\dot{f} := X_{0}f = \frac{df}{dt}, \quad 
\ddot{f} := X_{0}^{2}f = \frac{d^{2}\!f}{dt^{2}}.
\end{equation*}
\end{prop}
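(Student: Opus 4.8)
The plan is to compute each component of $\mathrm{Ric}$ directly from its defining frame trace, substituting the Riemann components already supplied by Lemma~\ref{Riem_curv}. Because $\{e_{0},e_{1},\ldots,e_{n}\}$ with $e_{0}=N^{-1}X_{0}$ and $e_{k}=a_{k}^{-1}X_{k}$ is orthonormal, every curvature slot carrying $e_{k}$ produces a factor $a_{k}^{-2}$ and the $e_{0}$ slot a factor $-N^{-2}$. Thus for each pair of arguments I reduce $\mathrm{Ric}$ to a weighted sum of the bracketed quantities in Lemma~\ref{Riem_curv}, after first bringing the curvature arguments into the orderings that actually appear there by the standard symmetries $R(X,Y,Z,W)=-R(Y,X,Z,W)=-R(X,Y,W,Z)=R(Z,W,X,Y)$.

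First I would treat $\mathrm{Ric}(X_{0},X_{0})$. The term $-R(e_{0},X_{0},X_{0},e_{0})$ vanishes by antisymmetry in the first pair, and each summand $a_{k}^{-2}R(X_{k},X_{0},X_{0},X_{k})$ equals $a_{k}^{-2}R(X_{0},X_{k},X_{k},X_{0})$ by the pair-swap symmetry, so the first line of Lemma~\ref{Riem_curv} applies. The $a_{k}^{2}$ cancels, and the identity $X_{0}^{2}(\log a_{k})+(X_{0}(\log a_{k}))^{2}=\ddot a_{k}/a_{k}$ collapses the bracket to $\dot N\dot a_{k}/(Na_{k})-\ddot a_{k}/a_{k}$. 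The off-diagonal component $\mathrm{Ric}(X_{0},X_{i})$ is analogous: the $e_{0}$-term again drops, the symmetries give $a_{k}^{-2}R(X_{k},X_{0},X_{i},X_{k})=a_{k}^{-2}R(X_{k},X_{i},X_{0},X_{k})$ matching the second line of the lemma, the factor $a_{k}^{2}$ cancels, and recognizing $\sum_{k}C_{ik}^{k}X_{0}\log(a_{k}/a_{i})$ as a total $t$-derivative of a logarithm yields the stated product.

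The substantive case is $\mathrm{Ric}(X_{i},X_{j})$, where both the time term $-N^{-2}R(X_{0},X_{i},X_{j},X_{0})$ and the spatial sum $\sum_{k}a_{k}^{-2}R(X_{k},X_{i},X_{j},X_{k})$ contribute; I would split it into the time-derivative part and the purely structural part. For the former, the time term supplies $-\dot N\dot a_{i}/(Na_{i})+\ddot a_{i}/a_{i}$ (times $a_{i}^{2}N^{-2}\delta_{ij}$) after the same logarithmic simplification, while the $\tfrac{1}{2N^{2}}$-piece of the spatial sum contributes, inside the same prefactor, $\frac{\dot a_{i}}{a_{i}}\sum_{k}\frac{\dot a_{k}}{a_{k}}$ from the $\delta_{ij}$ term and $-(\dot a_{i}/a_{i})^{2}$ from the $\delta_{ik}\delta_{kj}$ term. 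Isolating the $k=i$ summand of the former, which is exactly $(\dot a_{i}/a_{i})^{2}$, cancels the latter and leaves $\left(\sum_{k\neq i}\dot a_{k}/a_{k}-\dot N/N\right)\dot a_{i}/a_{i}+\ddot a_{i}/a_{i}$, as stated.

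For the structural part I would divide each term of the last line of Lemma~\ref{Riem_curv} by $a_{k}^{2}$ and sum over $k,l$. Matching against the displayed formula is pure bookkeeping with the antisymmetry $C_{ij}^{k}=-C_{ji}^{k}$ and relabeling of the two dummy indices: the $C_{kl}^{k}$-weighted term becomes the proposition's $C_{lk}^{k}$-weighted term because the sign flip $C_{kl}^{k}=-C_{lk}^{k}$ is compensated by the flips $C_{lj}^{i}=-C_{jl}^{i}$, $C_{li}^{j}=-C_{il}^{j}$; the quadratic and $-3C_{ki}^{l}C_{kj}^{l}$ terms transcribe verbatim after the $a_{k}^{-2}$ weighting; and the two cross terms $C_{il}^{k}C_{kj}^{l}$ and $C_{jl}^{k}C_{ki}^{l}$ coincide under $k\leftrightarrow l$ together with antisymmetry, producing the factor $2C_{il}^{k}C_{kj}^{l}$. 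The main obstacle is precisely this last reconciliation: the structure-constant double sums are where sign and index slips are easiest, so one must verify that every term of the proposition is accounted for exactly once, using only antisymmetry and dummy-index relabeling (no appeal to the Jacobi-type identity is needed at this stage, it having already been consumed in establishing Lemma~\ref{Riem_curv}).
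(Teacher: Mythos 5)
Your proposal is correct and matches the paper's own route: the paper likewise obtains Proposition~\ref{Ricci_curv} by tracing the curvature components of Lemma~\ref{Riem_curv} against the orthonormal frame (\ref{ONF}), and your bookkeeping (the pair-swap symmetries, the identity $X_{0}^{2}(\log a_{k})+(X_{0}\log a_{k})^{2}=\ddot a_{k}/a_{k}$, the cancellation of the $k=i$ term, and the dummy-index/antisymmetry reconciliations such as $\sum_{k,l}C_{jl}^{k}C_{ki}^{l}=\sum_{k,l}C_{il}^{k}C_{kj}^{l}$) all check out. No gaps.
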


By the assumption of spatial homogeneity, the Einstein equations become ODEs.
However, these equations are still difficult to be solved. Even in the four-dimensional case, the general solutions are found only for three cases (see Table~\ref{tBianchi}).
All of these cases correspond to almost abelian Lie groups, that we describe below.

\subsection{Almost abelian Lie groups}
Let $\mathfrak{g}$ be an $n$-dimensional real Lie algebra. 
It is \textit{almost abelian} if there exists an $(n-1)$-dimensional abelian ideal $\mathfrak{a} \subset \mathfrak{g}$ \cite{MR4533011}.
A connected $n$-dimensional Lie group $G$ is \textit{almost abelian} if the Lie algebra $\mathfrak{g}$ of $G$ is almost abelian.

By definition, when $\mathfrak{g}$ is an $n$-dimensional almost abelian Lie algebra, there exists a set of generators $\{X_{1}, \ldots, X_{n-1}, X_{n}\}$ and an 
$(n-1) \times (n-1)$ square matrix $[A_{ij}]_{1 \leq i,j \leq n-1}$ such that 
\begin{equation}
[X_{n}, X_{j}] = \sum_{k=1}^{n-1} A_{kj} X_{k}, 
\quad \textrm{otherwise} = 0. \label{Lie_alg_str}
\end{equation}
Conversely, for a real vector space $V$ with a basis $\{X_{1}, \ldots, X_{n-1}, X_{n}\}$ and an $(n-1) \times (n-1)$ square matrix $[A_{ij}]_{1 \leq i,j \leq n-1}$, when we define a Lie algebra structure on $V$ by the above relations (\ref{Lie_alg_str}), then $\mathfrak{g}_{A} := (V, [\, , \,])$ is almost abelian, 
and we call $A$ the associated matrix for $\mathfrak{g}_{A}$. 
For matrices $A, B$, the almost abelian Lie algebra $\mathfrak{g}_{A}$ is isomorphic to $\mathfrak{g}_{B}$ if and only if $A$ is similar to $B$ up to scaling; that is, 
there exist an invertible matrix $P$ and a non-zero real constant $c$ such that
$B = cP^{-1}AP$. 
See Table~\ref{tBianchi} for three-dimensional examples of $A$.

We consider the case where a Lie group $G$ of the spatially homogeneous spacetime $(M,g_{M})$ is almost abelian. In other words, we suppose that the structure constants are zero except for 
\begin{align} 
C^{i}_{nj} = -C^{i}_{jn} = A_{ij},\ (i,j=1,\dots,n-1).
\end{align}
By applying the structure constants to Proposition~\ref{Ricci_curv}, the Ricci tensor is simplified as follows.

\begin{cor} \label{alm_abel_Ric}  
When $G$ is almost abelian, 
the Ricci tensor of the spatially homogeneous spacetime $(M, g_{M})$ is as follows: 
\begin{align*}
\mathrm{Ric}(X_{0},X_{0}) 
=& \sum_{k=1}^{n} \left(\frac{\dot{N}}{N}\frac{\dot{a}_{k}}{a_{k}} 
-  \frac{\ddot{a}_{k}}{a_{k}}\right), \\ 
\mathrm{Ric}(X_{0},X_{n}) =& \frac{d}{dt} 
\log{\left[ \, \prod_{k=1}^{n-1} \left(\frac{a_{k}}{a_{n}} \right)^{A_{kk}}\right]}, \\ 
\mathrm{Ric}(X_{n},X_{n}) =& \frac{a_{n}^{2}}{N^{2}} 
\left[ \left(\sum_{k=1}^{n-1} \frac{\dot{a}_{k}}{a_{k}} 
- \frac{\dot{N}}{N} \right)\frac{\dot{a}_{n}}{a_{n}} 
+ \frac{\ddot{a}_{n}}{a_{n}} \right] 
- \frac{1}{2}\sum_{k,l=1}^{n-1}\left\{A_{lk}A_{kl} + \left(A_{kl} \right)^{2}\frac{a_{k}^{2}}{a_{l}^{2}} \right\}, \\
\mathrm{Ric}(X_{i},X_{j}) =& \frac{a_{i}^{2}}{N^{2}}\left[ \left(\sum_{\substack{k=1\\k \neq i}}^{n} \frac{\dot{a}_{k}}{a_{k}} - \frac{\dot{N}}{N} \right)\frac{\dot{a}_{i}}{a_{i}} 
+ \frac{\ddot{a}_{i}}{a_{i}} \right]\delta_{ij} \\
& \qquad - \frac{\mathrm{tr} A}{2} \frac{A_{ij}a_{i}^{2}+A_{ji}a_{j}^{2}}{a_{n}^{2}} + \frac{1}{2}\sum_{k=1}^{n-1} \frac{A_{ik}A_{jk}a_{i}^{2}a_{j}^{2} - A_{ki}A_{kj}a_{k}^{4}}{a_{n}^{2}a_{k}^{2}},
\end{align*}
where $1 \leq i, j \leq n-1$. 
The remaining components identically vanish. 
\end{cor}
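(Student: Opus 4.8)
The plan is to substitute the almost abelian structure constants directly into the formulae of Proposition~\ref{Ricci_curv} and simplify. The one genuinely useful preliminary is a clean description of the surviving components: by \eqref{Lie_alg_str} the only non-zero structure constants are $C^i_{nj}=A_{ij}$ and $C^i_{jn}=-A_{ij}$ with $i,j\in\{1,\dots,n-1\}$, so $C^k_{ab}$ vanishes \emph{unless} its upper index $k$ lies in $\{1,\dots,n-1\}$ and exactly one of the two lower indices equals $n$. I would record this dichotomy once at the start, since every subsequent simplification is just the repeated application of these two conditions to each factor appearing in Proposition~\ref{Ricci_curv}.

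With this dichotomy in hand the components are handled one at a time. The term $\mathrm{Ric}(X_0,X_0)$ involves no structure constants and is transcribed unchanged. For $\mathrm{Ric}(X_0,X_i)$ one evaluates $C^k_{ik}$: the upper index forces $k\le n-1$, and then a lower index must equal $n$, forcing $i=n$ with $C^k_{nk}=A_{kk}$; hence the logarithmic derivative reduces to the stated product $\prod_{k=1}^{n-1}(a_k/a_n)^{A_{kk}}$ when $i=n$ and is identically zero when $i\le n-1$. For $\mathrm{Ric}(X_n,X_n)$ I set $i=j=n$ in the $\mathrm{Ric}(X_i,X_j)$ formula of Proposition~\ref{Ricci_curv}; every factor $C^n_{\bullet\bullet}$ dies because its upper index is $n$, which annihilates the whole second group and reduces the third and fourth groups to $A_{kl}^2 a_k^2/a_l^2$ and to the single summand $-3C^l_{kn}C^l_{kn}a_l^2/a_k^2$ respectively. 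A relabelling $k\leftrightarrow l$ lets these two combine, and together with the first group $2C^k_{nl}C^l_{kn}=-2A_{kl}A_{lk}$ and the overall factor $\tfrac14$ this produces $-\tfrac12\sum_{k,l}\{A_{lk}A_{kl}+A_{kl}^2 a_k^2/a_l^2\}$.

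The substantive part, and the step I expect to be the main obstacle, is $\mathrm{Ric}(X_i,X_j)$ for $i,j\le n-1$. Here one expands the full double sum $\tfrac14\sum_{k,l=1}^n[\cdots]$ and, for each of its four groups, determines which pairs $(k,l)$ leave all factors non-zero. The first group $2C^k_{il}C^l_{kj}$ drops out completely: $C^k_{il}\ne0$ forces $l=n$, but then $C^l_{kj}=C^n_{kj}$ has upper index $n$ and vanishes. The second group survives only for $l=n$, where $C^k_{lk}=C^k_{nk}=A_{kk}$, and summing over $k$ turns $\sum_k A_{kk}$ into $\mathrm{tr}A$, yielding exactly $-\tfrac{\mathrm{tr}A}{2}(A_{ij}a_i^2+A_{ji}a_j^2)/a_n^2$. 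The delicate interplay is between the third group, which survives in the two subcases $k=n$ and $l=n$, and the fourth group, which survives only for $k=n$: after expanding both and relabelling the dummy index, the $A_{ik}A_{kj}a_i^2$ and $A_{ki}A_{jk}a_j^2$ cross terms cancel in pairs between the groups, while the diagonal $A_{ki}A_{kj}a_k^2$ contribution meets the factor $-3$ from the fourth group to leave $-2A_{ki}A_{kj}a_k^4/(a_n^2 a_k^2)$; combined with the doubled $2A_{ik}A_{jk}a_i^2a_j^2/(a_n^2 a_k^2)$ term and the factor $\tfrac14$ this gives the claimed $\tfrac12\sum_k(A_{ik}A_{jk}a_i^2 a_j^2-A_{ki}A_{kj}a_k^4)/(a_n^2 a_k^2)$. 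Keeping the signs and index positions straight through these cancellations is the one place real care is needed; the rest is routine index chasing.

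Finally, the vanishing of all remaining components---$\mathrm{Ric}(X_0,X_i)$ and $\mathrm{Ric}(X_i,X_n)$ with $i\le n-1$---follows from the same dichotomy: in each of the four groups of the relevant curvature sum at least one factor is forced either to carry upper index $n$ or to have all of its indices in $\{1,\dots,n-1\}$, so every term vanishes identically. This closes the case analysis and establishes the corollary.
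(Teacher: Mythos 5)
Your proposal is correct and is exactly the paper's route: the corollary is obtained by substituting the almost abelian structure constants $C^i_{nj}=-C^i_{jn}=A_{ij}$ into Proposition~\ref{Ricci_curv} and tracking which terms survive, which is precisely what you do (and your index bookkeeping — the vanishing of the first group, the $\mathrm{tr}\,A$ from the second, the cross-term cancellations between the third and fourth groups, and the vanishing of $\mathrm{Ric}(X_0,X_i)$ and $\mathrm{Ric}(X_i,X_n)$ for $i\le n-1$ — all checks out). The paper states this substitution in one line without details, so your write-up is simply a fuller version of the same argument.
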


In the $n=3$ case, the exact solutions have been found only for $G=\mathbb{R}^{3}, \ 
\mathbb{R}\mathrm{H}^{3}, \ H_{3}$ cases.
All of these have zero-dimensional moduli spaces \cite{MR1958155, MR2783396}.
This fact is generalized to arbitrary dimensions as below.
\begin{thm}[\cite{MR1958155, MR2783396}] \label{classification}
Let $G$ be an $n$-dimensional simply-connected, connected Lie group and $\mathfrak{g}$ its Lie algebra. 
The moduli space of left-invariant metrics are all zero-dimensional if and only if the Lie group $G$ is isomorphic to one of the following: 
\begin{equation*}
\mathbb{R}^{n} \ (n \geq 1), \quad 
\mathbb{R}\mathrm{H}^{n} \ (n \geq 2), \quad 
H_{3} \times \mathbb{R}^{n-3} \ (n \geq 3),
\end{equation*}
where $\mathbb{R}\mathrm{H}^{n}$ denotes the $n$-dimensional real hyperbolic space with a Lie group structure, and $H_{3}$ denotes the three-dimensional Heisenberg group 
\begin{equation*}
H_{3} = \left \{ 
\begin{bmatrix}
1 & x & z \\
0 & 1 & y \\
0 & 0 & 1 \\
\end{bmatrix} 
\, \middle| \ 
x, y, z \in \mathbb{R} \right \}.
\end{equation*}
\end{thm}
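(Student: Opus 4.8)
The plan is to realize the moduli space $\mathfrak{PM}$ concretely and reduce the vanishing of its dimension to an \emph{open-orbit} condition. Since a left-invariant metric on $G$ is determined by an inner product on $\mathfrak{g}$, the space of such metrics is identified with $\mathrm{Sym}_n^+\mathbb{R}\cong GL(n,\mathbb{R})/O(n)$. Two inner products yield left-invariant metrics that are isometric up to scaling precisely when they lie in the same orbit of $\mathbb{R}_{>0}\times\mathrm{Aut}(\mathfrak{g})$, where $\mathrm{Aut}(\mathfrak{g})\subset GL(n,\mathbb{R})$ acts by pullback and $\mathbb{R}_{>0}$ by homothety (this is the standard identification used to define $\mathfrak{PM}$); hence $\mathfrak{PM}=(\mathbb{R}_{>0}\times\mathrm{Aut}(\mathfrak{g}))\backslash\mathrm{Sym}_n^+\mathbb{R}$. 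Because $\mathrm{Sym}_n^+\mathbb{R}$ is a connected manifold, $\dim\mathfrak{PM}=0$ holds if and only if this action admits a dense open orbit.

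First I would set up the infinitesimal criterion for an open orbit. Fixing a basepoint inner product $Q$, the tangent space to $\mathrm{Sym}_n^+\mathbb{R}$ is the space $\mathrm{Sym}(\mathfrak{g},Q)$ of $Q$-symmetric endomorphisms, of dimension $n(n+1)/2$. Differentiating the action shows that the tangent space to the orbit through $Q$ is spanned by $\mathrm{Id}$ (the homothety direction) together with the $Q$-symmetrizations $\tfrac12(D+D^{\ast_Q})$ of all derivations $D\in\mathrm{Der}(\mathfrak{g})$. Thus the orbit is open exactly when
\[
\mathbb{R}\,\mathrm{Id}+\{\,\tfrac12(D+D^{\ast_Q}) : D\in\mathrm{Der}(\mathfrak{g})\,\}=\mathrm{Sym}(\mathfrak{g},Q),
\]
so that a necessary condition is $\dim\mathrm{Der}(\mathfrak{g})+1\ge n(n+1)/2$. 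For the ``if'' direction I would verify this spanning condition for each of the three families. For $\mathbb{R}^n$ one has $\mathrm{Aut}(\mathfrak{g})=GL(n,\mathbb{R})$, which already acts transitively on all inner products, so $\mathfrak{PM}$ is a single point. For $\mathbb{R}\mathrm{H}^n$ (the almost abelian algebra with $[X_n,X_i]=X_i$) and for $H_3\times\mathbb{R}^{n-3}$ I would compute $\mathrm{Der}(\mathfrak{g})$ explicitly from \eqref{Lie_alg_str} and exhibit, at the natural orthonormal basepoint, enough symmetrized derivations to fill $\mathrm{Sym}(\mathfrak{g},Q)$ together with $\mathrm{Id}$; these are finite linear-algebra checks.

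The substantive part is the ``only if'' direction: ruling out every other $\mathfrak{g}$. The dimension bound $\dim\mathrm{Der}(\mathfrak{g})+1\ge n(n+1)/2$ is necessary but far from sufficient, because the $Q$-symmetric parts of derivations typically lie in a proper subspace dictated by the bracket. My plan is to organize the analysis by the structure of $\mathfrak{g}$ — its nilradical, its center, and the unimodular/non-unimodular split — and to show that the spanning condition forces the bracket to be either trivial, of the rank-one homothetic type yielding $\mathbb{R}\mathrm{H}^n$, or the single Heisenberg relation yielding $H_3\times\mathbb{R}^{n-3}$.

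I expect this case analysis to be the main obstacle: the delicate step is the passage from the infinitesimal spanning condition to a rigid constraint on the structure constants, since one must combine the derivation computation with Milnor-type structural results rather than rely on a pure dimension count, and one must also argue that the extremal cases cannot occur outside the almost abelian setting. Since this classification is precisely the content of \cite{MR1958155, MR2783396}, I would ultimately appeal to their analysis to close the argument, the open-orbit reduction above being the conceptual framework that makes the verification for the three listed families transparent.
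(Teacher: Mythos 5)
The first thing to note is that the paper does not prove Theorem~\ref{classification} at all: it is imported verbatim from \cite{MR1958155,MR2783396}, and the only place its content is touched is the remark closing Section~\ref{sdiag}, where the consequence of zero-dimensionality (every inner product can be carried to a scalar matrix by an automorphism) is again quoted from \cite[Section~3]{MR2783396}. Since you also close the substantive ``only if'' direction by appealing to those same references, your attempt has the same logical standing as the paper's treatment, and your orbit-space formulation $\mathfrak{PM}=(\mathbb{R}_{>0}\times\mathrm{Aut}(\mathfrak{g}))\backslash\mathrm{Sym}_{n}^{+}\mathbb{R}$ is indeed the definition used there. To that extent the proposal is consistent with the paper.

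However, the part you do claim to prove (the ``if'' direction) rests on a criterion that is wrong as stated. The equivalence ``$\dim\mathfrak{PM}=0$ if and only if the action has a dense open orbit'' fails in both directions for real group actions, and connectedness of $\mathrm{Sym}_{n}^{+}\mathbb{R}$ does not repair it: for instance, $\mathrm{GL}_{2}\mathbb{R}$ acting by left multiplication on $2\times 2$ matrices has a dense open orbit, yet the rank-one matrices split into a one-parameter family of orbits, so the quotient is one-dimensional; conversely, the invertible upper-triangular group acting on $\mathbb{R}^{2}$ has a finite, hence zero-dimensional, orbit space without being transitive. Consequently, verifying your infinitesimal spanning condition at one convenient basepoint $Q$ only shows that the single orbit through $Q$ is open; it does not control the rest of the moduli space and does not yield $\dim\mathfrak{PM}=0$. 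What is actually true for the three listed families, and what the cited proofs establish, is transitivity: for every $S\in\mathrm{Sym}_{n}^{+}\mathbb{R}$ there is $\Phi\in\mathrm{Aut}(\mathfrak{g})$ with $\Phi^{T}S\Phi=\mathrm{diag}(\lambda^{2},\ldots,\lambda^{2})$, i.e.\ the strengthened form of condition \eqref{diag_cond} quoted at the end of Section~\ref{sdiag}; transitivity makes $\mathfrak{PM}$ a single point and hence zero-dimensional, so your linear-algebra check should be replaced by (or upgraded to) a proof of transitivity. Separately, your justification of the identification of $\mathfrak{PM}$ --- ``isometric up to scaling precisely when in the same orbit'' --- overstates what is known: same orbit implies isometric, but the converse (isometry forces automorphism-relatedness) is a nontrivial theorem that holds for, e.g., simply connected nilpotent or completely solvable groups and is not automatic in general; the harmless fix is that $\mathfrak{PM}$ is \emph{defined} as this orbit space in \cite{MR2783396}, so no converse is needed.
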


The Lie groups given in Theorem~\ref{classification} are almost abelian, and the associated matrices are, up to scaling, 
similar to $O_{n-1}, E_{n-1}$ and $J_{2}(0) \oplus O_{n-3}$, respectively.
When $n=3$, these correspond to 
Bianchi type~I, V and II, respectively.

The proposition below is useful for obtaining metrics in the coordinate basis.
\begin{prop} \label{plinv}
The left-invariant basis and the left-invariant forms of almost abelian algebras can be explicitly given in the coordinates as \begin{align}
    X_j &= \sum_{k=1}^{n-1}(e^{A x_n})_{kj}\partial_k,\quad X_n=\partial_n,\\
    \omega^j &= \sum_{k=1}^{n-1}(e^{-A x_n})_{jk}dx_k,\quad \omega^n=dx_n, \label{eome}
\end{align}
where $j=1,\dots, n-1$, $\partial_\mu := \frac{\partial}{\partial x_\mu}$.
\end{prop}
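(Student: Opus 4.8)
The plan is to realize $G$ explicitly as a semidirect product and then compute the pushforward of the coordinate frame at the identity under left translations. Concretely, writing $x=(x_1,\dots,x_{n-1})\in\mathbb{R}^{n-1}$, I would identify the simply-connected group with Lie algebra $\mathfrak{g}_A$ as $G=\mathbb{R}^{n-1}\rtimes_A\mathbb{R}$ with global coordinates $(x,x_n)$ and multiplication
\[
(x,x_n)\cdot(y,y_n)=\bigl(x+e^{x_n A}y,\ x_n+y_n\bigr).
\]
Associativity is immediate from $e^{(x_n+y_n)A}=e^{x_nA}e^{y_nA}$, the identity is the origin, and inverses exist, so this is a Lie group. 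I would first confirm that its Lie algebra is indeed $\mathfrak{g}_A$, i.e.\ that the left-invariant fields obtained below close under the relations \eqref{Lie_alg_str} with the correct signs and index placement.

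The key step is the pushforward. Left translation by $g=(x,x_n)$ sends $(y,y_n)$ to $(x+e^{x_nA}y,\ x_n+y_n)$, so its differential at the identity is block-diagonal: it acts as the matrix $e^{x_nA}$ on the $\partial_1,\dots,\partial_{n-1}$ directions and as the identity on $\partial_n$. Hence the left-invariant extension of $\partial_j|_e$ is $X_j=\sum_{k=1}^{n-1}(e^{x_nA})_{kj}\partial_k$ and that of $\partial_n|_e$ is $X_n=\partial_n$, which are exactly the asserted expressions. As a consistency check I would verify the brackets directly: since $\partial_n$ differentiates the coefficients via $\partial_n e^{x_nA}=Ae^{x_nA}=e^{x_nA}A$, one gets $[X_n,X_j]=\sum_k(Ae^{x_nA})_{kj}\partial_k=\sum_{i=1}^{n-1} A_{ij}X_i$, while $[X_i,X_j]=0$ because the coefficients depend on $x_n$ only and the fields carry no $\partial_n$ component; this reproduces \eqref{Lie_alg_str}.

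For the coframe I would simply solve the duality equations $\omega^i(X_j)=\delta^i_j$. Writing $\omega^i=\sum_{l=1}^{n-1}B_{il}\,dx_l$ for $i\le n-1$, the pairing $\omega^i(X_n)=0$ holds automatically since each $X_j$ ($j\le n-1$) spans only the first $n-1$ coordinate directions, and $\omega^i(X_j)=\sum_{l}B_{il}(e^{x_nA})_{lj}=(Be^{x_nA})_{ij}=\delta_{ij}$ forces $B=e^{-x_nA}$, giving $\omega^j=\sum_{k=1}^{n-1}(e^{-x_nA})_{jk}\,dx_k$. The last form is fixed by $\omega^n(X_j)=0$ and $\omega^n(X_n)=1$, hence $\omega^n=dx_n$. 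That $\{\omega^i\}$ is genuinely left-invariant then follows because it is the dual coframe of a left-invariant frame.

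The only genuinely nontrivial ingredient is pinning down the multiplication law so that the resulting Lie algebra is $\mathfrak{g}_A$ with the conventions of \eqref{Lie_alg_str}; once the semidirect-product structure is fixed, every remaining step reduces to the matrix identities $e^{-x_nA}e^{x_nA}=I$ and $Ae^{x_nA}=e^{x_nA}A$, and is routine.
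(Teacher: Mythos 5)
Your proof is correct, and its computational core coincides with the paper's: the paper's entire proof is the one-line remark that one can ``directly check'' that the stated $X_j$ satisfy the commutation relations \eqref{Lie_alg_str} and that $\omega^k(X_l)=\delta^k_l$, which is exactly the pair of consistency checks you carry out at the end via the identities $\partial_n e^{x_nA}=Ae^{x_nA}=e^{x_nA}A$ and $e^{-x_nA}e^{x_nA}=I$. Where you go beyond the paper is in first realizing $G$ concretely as the semidirect product $\mathbb{R}^{n-1}\rtimes_A\mathbb{R}$ with multiplication $(x,x_n)\cdot(y,y_n)=(x+e^{x_nA}y,\,x_n+y_n)$ and obtaining $X_j$ as the pushforward of $\partial_j|_e$ under left translation. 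This buys something real: verifying brackets and duality alone shows the frame has the right structure constants, but it does not by itself certify that these fields are the \emph{left-invariant} ones for a given group structure on the coordinate chart -- that identification is exactly what your pushforward computation supplies, and what the paper leaves implicit (it tacitly assumes the standard semidirect-product coordinates you wrote down). So your argument is longer but self-contained, while the paper's is minimal and delegates the group realization to the reader; both are valid, and the matrix identities doing the work are the same.
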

\begin{proof}
We can directly check that these $X_j$ and $\omega^j$ satisfies the commutation relation \eqref{Lie_alg_str} and $\omega^k(X_l)=\delta^{k}_{l}$. 
\end{proof}

\section{Higher-dimensional vacuum solutions}
\label{s:dim}

\subsection{$G=\mathbb{R}^n,\ \mathbb{R}\mathrm{H}^n$ cases}
First, we consider an $(n+1)$-dimensional ($n\ge 1$) spatially homogeneous spacetime 
$M = I \times G$ whose Lie group $G$ is 
an abelian group $\mathbb{R}^{n}$.
The Lie group is trivially almost abelian and 
the associated matrix is the zero matrix $A = O_{n-1}$. 
It is well-known that except for Minkowski spaces 
the exact vacuum solution is given by 
\begin{equation}
    (I \times \mathbb{R}^{n}, -dt^{2} + t^{2p_{1}} dx_{1}^{2} + \cdots + t^{2p_{n}}dx_{n}^{2} ), \label{eq:kas}
\end{equation}
where $p_{1} + \cdots + p_{n} = p_{1}^{2} + \cdots + p_{n}^{2} = 1$, 
and this is called the Kasner solution \cite{Chodos:1979vk}. 

Next, we consider that the Lie group $G$ is 
a real hyperbolic space $\mathbb{R}\mathrm{H}^{n}$ ($n\ge 2$). 
The Lie group is almost abelian and the associated matrix 
is similar to the identity matrix $A = E_{n-1}$ up to scaling. 
The $(n+1)$-dimensional general solution of this type was given in \cite[(14)]{Demaret:1985pt}:
\begin{equation}
\begin{split}
&ds^2 = (\sinh{t})^\frac{2}{n-1}\left[-dt^2+dx_n^2+\sum_{i=1}^{n-1}(\tanh{t}/2)^{2\alpha_i}e^{-\frac{x_n}{n-1}}dx_i^2\right],\\
&\sum_{i=1}^{n-1}\alpha_i = 0,\quad \sum_{i=1}^{n-1}\alpha_i^2 =\frac{n}{n-1},\quad n\geq 3,
\end{split}
\label{eq:solV}
\end{equation}
where $\alpha_i$ are real constants.
This type of solution does not exist for $n=2$.
In the $n=3$ case, it matches the solution in \cite{MR191573}.
In the $n=4$ case, it matches \cite[(4.23)]{Christodoulakis:2003wy}.
The other special solution is \cite[(13)]{Demaret:1985pt}:\footnote{It can be obtained by substituting $P_i=1/(n-1)$ in \cite[(13)]{Demaret:1985pt} and changing coordinates appropriately.}
\begin{align}
ds^2 = -dt^2+\frac{t^2}{x_n^2}\sum_{i=1}^{n}dx_i^2. \label{eq:mil}
\end{align}
It is the Milne solution generalized to $n\geq 2$.

The solutions \eqref{eq:solV} and \eqref{eq:mil} are obtained in \cite{Demaret:1985pt} by assuming that the metrics are diagonal.
We show that we can always diagonalize the metrics without loss of generality in Section~\ref{sdiag}.

\subsection{$G=H_{3} \times \mathbb{R}^{n-3} $ case} \label{ss:aj02}
We consider an $(n+1)$-dimensional spatially homogeneous spacetime $M = I \times G$ whose Lie group $G$ is the product $H_{3} \times \mathbb{R}^{n-3}$ of a three-dimensional Heisenberg group and an abelian group. 
The Lie group is almost abelian and the associated matrix is similar to  $J_{2}(0) \oplus O_{n-3}$ with the one off-diagonal component up to scaling. 
More explicitly, we consider the $(n-1)\times (n-1)$ matrix 
\begin{align} 
A = J_{2}(0) \oplus O_{n-3} = 
\begin{bmatrix}
0 & 1 &&&\\
0&0&&&\\
&&0 &&\\
&&&\ddots &\\
&&&& 0
\end{bmatrix}.
\end{align}

We solve the Einstein equations below. 
We find that the Ricci tensor is simplified by choosing
\begin{align} 
N=c_0\prod_{i=1}^n \frac{a_i}{c_i},
\end{align}
where $c_i:=a_i(0)$ are positive constants.
We can choose $c_i=c\ (i=0,\dots, n)$ by the isometry and $c=1$ by rescaling coordinates, but we leave it to compare with the solutions that have already been found in four- and five-dimensional cases.
We substitute this $N$ and $A$ into the Ricci tensor given in Corollary~\ref{alm_abel_Ric}.
Since $A$ has no diagonal components, $\text{Ric}(X_0,X_i)\ (i=1,\dots,n-1)$  are trivially zero. From the other components, the vacuum Einstein equations become
\begin{align}
\text{tr}\dot{H} &= (\text{tr}H)^2-\text{tr}H^2,\label{eq:r0} \\
\dot{H_1}&=-\frac{a_1^2N^2}{2a_2^2a_n^2}, \label{eq:r11}\\
\dot{H_2}&=\frac{a_1^2N^2}{2a_2^2a_n^2}, \label{eq:r22}\\
\dot{H_i}&=0 \quad (i=3,\dots, n-1),  \label{eq:rii}\\
\dot{H_n}&=\frac{a_1^2N^2}{2a_2^2a_n^2} \label{eq:rnn}, 
\end{align}
where we defined $H_{i} := \dot{a}_{i}/a_{i}$ and $H := \mathrm{diag}(H_{1}, \ldots, H_{n})$. 
The equation \eqref{eq:rii} can be directly integrated. 
By summing \eqref{eq:r11} and \eqref{eq:r22}, or \eqref{eq:r11} and \eqref{eq:rnn}, these can be also integrated. 
The result is, by writing the initial condition as $h_i:=H_i(0)$,
\begin{equation}
\begin{split}
H_1+H_2 &= h_1+h_2,\\
H_1+H_n &= h_1+h_n,\\
H_i &= h_i.
\end{split}
\end{equation}
By integrating these equations again,
\begin{equation}
\begin{split}
a_1 a_2 &= c_1 c_2e^{(h_1+h_2)t},\\
a_1 a_n &= c_1 c_ne^{(h_1+h_n)t},\\
a_i &= c_ie^{h_i t}.    
\label{eq:a1a2c}
\end{split}
\end{equation}
By substituting these equations into (\ref{eq:r11}), it become a differential equation of $a_1$ only:
\begin{equation}
\begin{split}
&\dot{H_1}+\frac{c'}{2}a_1^4 e^{2h't}=0\quad \left(h':= \sum_{i=3}^{n-1}h_i,\ c':= \frac{c_0^2}{c_1^2 c_2^2 c_n^2}\right).  
\end{split}
\end{equation}
The solution of this differential equation is:
\begin{align} 
a_1 &= \frac{k^{1/2}}{(c')^{1/4}} e^{-\frac{h'}{2} t}\cosh^{-\frac{1}{2}}[k(t-t_0)],
\end{align}
where $k,\ t_0$ are integration constants. By shifting time, we choose $t_0=0$, then
\begin{align} 
a_1 &= c_1 e^{h_1 t}\cosh^{-\frac{1}{2}}kt,\label{eq:a1c}\\
k &= \frac{c_0 c_1}{c_2 c_n}, \quad
h' = -2h_1. \label{eq:kc0}
\end{align}
By substituting it to (\ref{eq:a1a2c}), the other scale factors are
\begin{align}
    a_2&=c_2e^{h_2 t}\cosh^{\hf}kt,\\
    a_n&=c_ne^{h_n t}\cosh^{\hf}kt.
\end{align}

Finally, we substitute them into the Einstein equation (\ref{eq:r0}). 
The result is a constraint on the initial conditions:
\begin{align} 
(\tr h)^2-\tr(h^2)=\frac{k^2}{2}, \label{eq:h2h}
\end{align}
where $h := \text{diag}(h_1,\dots h_n)$.

From the general formula \eqref{eome}, the left-invariant forms are:
\begin{align} 
\omega^1=dx_1 -x_n dx_2, \quad \omega^i = dx_i \ (i=2,3,\dots n).
\end{align}
By using these forms, scale factors $a_i$ derived above, and collecting the conditions (\ref{eq:kc0}), (\ref{eq:h2h}), we obtain the solution (\ref{eq:ds2}). 

Our solution matches with previous studies on the $n=3,4$ cases. 
In the $n=3$ case, by choosing
\begin{align} 
c_0=\frac{1}{c_1}=c_2=c_3=\frac{1}{\sqrt{k}},
\end{align}
our solution realizes the Bianchi type II solution \cite{Taub:1950ez} (in the notation of \cite[(13.55)]{MR2003646}).
In the $n=4$ case, our result includes \cite[(4.5)]{Christodoulakis:2003wy} as a case of
\begin{align} 
c_0=\frac{1}{c_1},\ c_2=c_4=\frac{1}{\sqrt{k}}, \ c_3=1.
\end{align}

\begin{prop} \label{sp_anistrp}
The vacuum solution {\rm (\ref{eq:ds2})} is spatially homogeneous but not spatially isotropic. 
\end{prop}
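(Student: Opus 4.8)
The plan is to treat the two assertions separately, since spatial homogeneity is essentially built into the construction while anisotropy requires an honest curvature computation. For homogeneity I would argue directly from left-invariance. In the left-invariant coframe of Proposition~\ref{plinv} (with $\omega^1 = dx_1 - x_n dx_2$ and $\omega^i = dx_i$ for $i \geq 2$), the spatial part of \eqref{eq:ds2} is exactly $g_G(t) = \sum_{i=1}^{n} a_i^2 (\omega^i)^2$ with the scale factors read off from \eqref{eq:ds2} and with $a_i$ depending only on $t$. Every left translation $L_g$ on $G = H_3 \times \mathbb{R}^{n-3}$ satisfies $L_g^* \omega^i = \omega^i$, so its extension $\tilde L_g(t,p) = (t, gp)$ on $I \times G$ fixes $t$, preserves $-N^2 dt^2$, and preserves each $a_i^2 (\omega^i)^2$; hence $\tilde L_g$ is an isometry. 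Given $x, y \in \Sigma_{t_0}$, taking $g = yx^{-1}$ gives $\tilde L_g(t,x) = (t,y)$ for all $t$, which is precisely the homogeneity condition.

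For the failure of isotropy I would work in the orthonormal frame \eqref{ONF} and use the Riemann tensor of Lemma~\ref{Riem_curv}. Suppose for contradiction that the spacetime were spatially isotropic at some point $(t_0,x)$. Any isometry $f$ with $f(t,x)=(t,x)$ for all $t$ fixes the timelike curve $t \mapsto (t,x)$ pointwise, so $f_* e_0 = e_0$ and $f_*$ preserves the spatial subspace $e_0^\perp = T_x \Sigma_{t_0}$. Since $f$ preserves $R$, transitivity on unit spacelike vectors would force the quadratic form $v \mapsto R(e_0, v, v, e_0)$ to be constant on the unit sphere of $T_x \Sigma_{t_0}$; equivalently, the diagonal numbers $R(e_0, e_i, e_i, e_0)$ must all coincide. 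Thus isotropy is reduced to the equality of these mixed sectional curvatures.

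The key step is then to exhibit two indices for which this coincidence fails. From Lemma~\ref{Riem_curv} one gets $R(e_0, e_i, e_i, e_0) = \beta_i / N^2$ with $\beta_i = \frac{\dot N}{N}\frac{\dot a_i}{a_i} - \frac{\ddot a_i}{a_i}$. Substituting the scale factors of \eqref{eq:ds2}, in particular $a_1 = c_1 e^{h_1 t}\cosh^{-1/2}(kt)$ and $a_2 = c_2 e^{h_2 t}\cosh^{1/2}(kt)$, a short computation writes $\beta_1 - \beta_2$ as a quadratic polynomial in $\tanh(kt)$ whose leading coefficient is $-\tfrac{3}{2}k^2$ (independently of the $h_i$). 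Since $k = c_0 c_1/(c_2 c_n) > 0$ by \eqref{eq:ds2}, this coefficient never vanishes, so $\beta_1 - \beta_2$ is not identically zero and there is a time $t_0$ with $R(e_0, e_1, e_1, e_0) \neq R(e_0, e_2, e_2, e_0)$. At the corresponding point the unit spacelike vectors $e_1, e_2$ cannot be related by any point-fixing isometry, contradicting isotropy. This comparison works uniformly for all $n \geq 3$, so no separate treatment of small $n$ is needed.

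The main obstacle I anticipate is organizational rather than conceptual: one must justify carefully that a point-fixing isometry acts on $T_x\Sigma_{t_0}$ while preserving $e_0$, so that isotropy genuinely reduces to equality of the $R(e_0,e_i,e_i,e_0)$. The substantive point is that the $\cosh(kt)$ factors (with $k \neq 0$ forced by the constraints) break the symmetry between directions carrying these factors and the purely exponential directions. A more conceptual alternative, which I would mention as a cross-check, is that the induced metric on $\Sigma_{t_0}$ is a left-invariant metric on the non-flat Heisenberg factor $H_3$, whose Ricci tensor is never proportional to the metric; hence the slice itself is not isotropic, and neither is the spacetime.
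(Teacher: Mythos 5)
Your proof is correct, and its anisotropy half runs on a genuinely different invariant than the paper's. The paper reduces spatial isotropy to equality of the expansion rates: from Proposition~\ref{LC_conn} one has $\nabla_{e_i}e_0=(H_i/N)\,e_i$, so an isotropic solution must satisfy $H_1(t)=\cdots=H_n(t)$ for all $t$, and for \eqref{eq:ds2} this fails at once because $H_1-H_2=h_1-h_2-k\tanh(kt)$ is not identically zero when $k\neq 0$. You instead use the tidal curvature components $R(e_0,e_i,e_i,e_0)$ of Lemma~\ref{Riem_curv} and show that $\beta_1-\beta_2$ is a quadratic in $\tanh(kt)$ with leading coefficient $-\tfrac{3}{2}k^2$; this checks out (using $\dot N/N=\tr H=\tr h+\tfrac{k}{2}\tanh kt$, $H_1=h_1-\tfrac{k}{2}\tanh kt$, $H_2=h_2+\tfrac{k}{2}\tanh kt$, and $\dot H_2-\dot H_1=k^2(1-\tanh^2 kt)$, the $\tanh^2$ coefficient is $-\tfrac{k^2}{2}-k^2=-\tfrac{3}{2}k^2$). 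What your route buys: the curvature argument is cleanly pointwise. A point-fixing isometry satisfies $f_*e_0=e_0$ at $(t_0,x)$ simply because it fixes the curve $t\mapsto(t,x)$, it therefore preserves $e_0^{\perp}=T_x\Sigma_{t_0}$, and it preserves $R$; so the reduction to equality of the diagonal components needs no further justification. The paper's shape-operator reduction tacitly requires that the isometry intertwine $\nabla e_0$, i.e.\ preserve the unit normal field (or the foliation) near the curve, a step it does not spell out. What the paper's route buys is brevity: once the reduction to $H_1=\cdots=H_n$ is accepted, the violation is a one-line check, with no quadratic computation. Your explicit left-translation argument for homogeneity is also sound; the paper treats homogeneity as holding by construction (Section~\ref{s:pre}), so this part coincides with the paper's implicit reasoning.

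One caveat: your closing ``cross-check'' --- that the slice carries a left-invariant metric whose Ricci tensor is not proportional to the metric, hence the spacetime cannot be spatially isotropic --- is only heuristic as stated. A spacetime isometry fixing the curve $t\mapsto(t,x)$ need not map $\Sigma_{t_0}$ to itself, so spatial isotropy of the spacetime does not directly yield point-isotropy of the slice as a Riemannian manifold without an additional argument (essentially the same normal-field issue noted above). Since you offer it only as a sanity check and your main argument is self-contained, this does not affect the validity of the proof.
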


\begin{proof}
For a diagonal spatially homogeneous spacetime $M = I \times G$, 
since $e_{0}$ is the unit timelike vector of the Cauchy hypersurface 
$G_{t} = \{t\} \times G \subset M$ and from Proposition~\ref{LC_conn}, we have 
\begin{equation*}
    \nabla_{e_{i}}e_{0} = \frac{1}{N(t)a_{i}(t)} \nabla_{X_{0}}X_{i} 
    = \frac{H_{i}(t)}{N(t)} e_{i}.
\end{equation*}
If the diagonal spatially homogeneous spacetime is spatially isotropic, 
then by definition, we have for all $t \in I$ 
\begin{equation}
H_{1}(t) = \cdots = H_{n}(t) \label{tot_umb}
\end{equation}
for the orthonormal frame $\{e_{1}, \ldots, e_{n} \}$ of $G_{t}$.

For $A = J_{2}(0) \oplus O_{n-3}$, 
the equations~(\ref{tot_umb}) do not hold. 
\end{proof}

\section{Diagonalizability of metrics} \label{sdiag}
The $n$-dimensional solutions for the cases $G=\mathbb{R}^n$ \cite{Chodos:1979vk}, $\mathbb{R}\mathrm{H}^n$ \cite{Demaret:1985pt} and $H_3\times \mathbb{R}^{n-3}$ \eqref{eq:ds2} are all diagonal metrics.
In this section, we prove that it is enough to consider diagonal metrics for these Lie groups.

Let $G$ be a connected Lie group, $\mathfrak{g}$ its Lie algebra and $\{X_{1}, \ldots, X_{n}\}$ a set of generators of $\mathfrak{g}$.
Then with respect to the generators, we define the automorphism group of $\mathfrak{g}$ as
\begin{equation*}
    \textrm{Aut}(\mathfrak{g}) = 
    \left\{ \Phi=\left[\varphi_{j}^{i}\right] \in 
    \textrm{GL}_{n}\mathbb{R} \ \middle| \ 
    \sum_{l=1}^{n} C_{ij}^{l} \varphi_{l}^{k} = \sum_{l,h=1}^{n} \varphi_{i}^{l}\varphi_{j}^{h}C_{lh}^{k} \ (1 \leq i,j,k \leq n) \right \},
\end{equation*}
where $C_{ij}^{k}$ is the structure constants determined from the generators.

For a general spatially homogeneous spacetime
\begin{equation*}
    (M, g_{M}) = \left(I \times G, -N^2 dt^2 + \sum_{i,j=1}^{n} s_{ij}(t) \,  \omega^{i} \otimes \omega^{j} \right),
\end{equation*}
we consider the following transformation
\begin{equation}
[\overline{s}_{ij}(t)] = \Phi^{T} [s_{ij}(t)] \Phi \quad 
(\Phi \in \textrm{Aut}(\mathfrak{g})). \label{trans}
\end{equation}
Then the above spatially homogeneous spacetime $(M, g_{M})$ is isometric to
\begin{equation*}
    \left(I \times G, -N^{2} dt^{2} + \sum_{i,j=1}^{n} \overline{s}_{ij}(t) \, \omega^{i} \otimes \omega^{j} \right)
\end{equation*}
as Lorentzian manifolds since $\Phi$ preserves the structure constants.

Let $\{X_{1}, \ldots, X_{n}\}$ be a certain set of generators of a Lie algebra $\mathfrak{g}$. 
We consider the following conditions: 
For any positive-definite symmetric matrix $S \in \textrm{Sym}_{n}^{+}\mathbb{R}$, 
there exists an automorphism $\Phi \in \textrm{Aut}(\mathfrak{g})$ such that 
\begin{equation}
    \Phi^{T} S \Phi = \textrm{diag}(\lambda_{1}^{2}, \ldots, \lambda_{n}^{2}), \label{diag_cond}
\end{equation}
where $\lambda_{1}, \ldots, \lambda_{n}$ denote positive numbers. 

\begin{thm} \label{diag}
Let $\mathfrak{g}$ be a Lie algebra. 
Let a set of generators $\{X_{1}, \ldots, X_{n}\}$ satisfy 
the above condition (\ref{diag_cond}). 
Then for a general spatially homogeneous spacetime 
\begin{equation*}
    (M, g_{M}) = \left(I \times G, -N(t)^{2} dt^{2} + \sum_{1 \leq i,j \leq n} s_{ij}(t) \, \omega^{i} \otimes \omega^{j} \right), 
\end{equation*} 
if it is Ricci-flat, $(M,g_{M})$ is locally isometric to 
a diagonal spatially homogeneous spacetime
\begin{equation*}
    \left(I \times G, -N(t)^{2} dt^{2} + a_{1}(t)^{2} (\omega^{1})^{2} + \cdots + a_{n}(t)^{2} (\omega^{n})^{2} \right), 
\end{equation*}
where $\{\omega^{1}, \ldots, \omega^{n} \}$ denote 
left-invariant 1-forms on $G$ 
with respect to $\{X_{1}, \ldots, X_{n}\}$.
\end{thm}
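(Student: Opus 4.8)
The plan is to reduce the global-in-$t$ diagonalization to a single instant and then to propagate diagonality along the Ricci-flat flow. First I would fix $t_{0} \in I$ and use the hypothesis \eqref{diag_cond} to choose $\Phi \in \textrm{Aut}(\mathfrak{g})$ with $\Phi^{T}[s_{ij}(t_{0})]\Phi$ diagonal. Applying the isometry \eqref{trans} with this \emph{time-independent} $\Phi$, I may assume that $[s_{ij}(t_{0})]$ is already diagonal. The whole statement then reduces to showing that the off-diagonal entries $s_{ij}(t)$ $(i \neq j)$ of a Ricci-flat curve that is diagonal at $t_{0}$ vanish on a neighborhood of $t_{0}$.

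Next I would split $\textrm{Ric}=0$ in ADM fashion: the spatial equations $\textrm{Ric}(X_{i},X_{j})=0$ are second-order evolution equations for $[s_{ij}(t)]$, whereas $\textrm{Ric}(X_{0},X_{i})=0$ and $\textrm{Ric}(X_{0},X_{0})=0$ are first-order constraints. Writing $[s_{ij}] = D + E$ with $D$ diagonal and $E$ off-diagonal, the key point is that the diagonal configurations form a \emph{consistent truncation}: when $E=0$ the off-diagonal spatial Ricci components vanish identically. For the almost abelian algebras at hand this is exactly the computation behind Corollary~\ref{alm_abel_Ric}, since for $A=J_{2}(0)\oplus O_{n-3}$ (and likewise for $O_{n-1}$ and $E_{n-1}$) the structure-constant sums producing $\textrm{Ric}(X_{i},X_{j})$ with $i\neq j$ are zero. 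Consequently the evolution equations for $E$ carry no source term, so that $E(t_{0})=0$ and $\dot{E}(t_{0})=0$ would force $E\equiv 0$ by uniqueness for ODEs.

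The main obstacle is therefore the initial velocity: diagonalizing $[s_{ij}(t_{0})]$ does not by itself make $[\dot{s}_{ij}(t_{0})]$ diagonal. To remove the off-diagonal part of $\dot{s}(t_{0})$ I would use two ingredients in tandem. The momentum constraints $\textrm{Ric}(X_{0},X_{i})=0$, evaluated at $t_{0}$ against the diagonal $D$, are first-order in $\dot{s}$ and restrict the admissible off-diagonal velocities; and the residual automorphisms, namely those $\Psi\in\textrm{Aut}(\mathfrak{g})$ preserving the diagonal form of $[s_{ij}(t_{0})]$, are used to eliminate the velocities that survive. For $G=\mathbb{R}^{n}$ this residual group contains $\mathrm{O}(n)$ and diagonalizes $[\dot{s}_{ij}(t_{0})]$ outright; for $\mathbb{R}\mathrm{H}^{n}$ and $H_{3}\times\mathbb{R}^{n-3}$ it is smaller, so the constraints must be combined with the zero-dimensionality of the moduli space to show that the constrained velocity is already removable. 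I expect this case analysis, rather than the propagation itself, to be where the real work lies.

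Finally, once $[s_{ij}(t_{0})]$ and $[\dot{s}_{ij}(t_{0})]$ are both diagonal, the off-diagonal entries solve a homogeneous linear second-order system with vanishing Cauchy data at $t_{0}$; uniqueness gives $s_{ij}\equiv 0$ for $i\neq j$ near $t_{0}$. Setting $a_{i}(t)^{2}:=s_{ii}(t)$ then exhibits the desired local isometry to the diagonal spacetime.
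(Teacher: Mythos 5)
Your overall skeleton is the same as the paper's: fix $t_{0}$, use \eqref{diag_cond} with a time-independent automorphism to make $[s_{ij}(t_{0})]$ diagonal, observe that diagonal metrics form a consistent truncation of the Ricci-flat system for the algebras at hand (you are actually more explicit about this than the paper, and the verification via Corollary~\ref{alm_abel_Ric} is correct), and conclude by uniqueness of solutions of ODEs once the initial \emph{velocity} is also diagonal. The gap is precisely the step you yourself defer: you never make $[\dot{s}_{ij}(t_{0})]$ diagonal. Your proposed mechanism (momentum constraints plus residual automorphisms, handled case by case, ``where the real work lies'') is not carried out, and it is not clear it can succeed: the off-diagonal velocity has $n(n-1)/2$ components, while $\mathrm{Ric}(X_{0},X_{i})=0$ gives at most $n$ linear conditions, and the residual automorphism group is small. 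Concretely, for $\mathbb{R}\mathrm{H}^{n}$ every automorphism has the form $\Phi=\left[\begin{smallmatrix} B & v \\ 0 & 1\end{smallmatrix}\right]$ with $B\in\mathrm{GL}_{n-1}\mathbb{R}$; preserving diagonality of $[s_{ij}(t_{0})]$ forces $v=0$, and then $\Phi$ acts on the column $\bigl(\dot{s}_{in}(t_{0})\bigr)_{i<n}$ by the invertible matrix $B^{T}$, so no residual automorphism can ever remove these components. Your route therefore stands or falls on proving that the constraints force $\dot{s}_{in}(t_{0})=0$ exactly, a computation absent from the proposal. (A smaller inaccuracy: your last paragraph calls the off-diagonal system ``homogeneous linear''; it is not linear, and the correct mechanism is the one you stated earlier, namely existence of a diagonal solution plus uniqueness for the full nonlinear system.)

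What makes the theorem short --- and what the paper actually does --- is that this step needs neither the field equations nor any case analysis, because the second transformation is not required to be an automorphism. Having arranged $C(t_{0})=D^{2}$ with $D=\mathrm{diag}(\lambda_{1},\ldots,\lambda_{n})$, the paper takes an orthogonal $Q$ diagonalizing the symmetric matrix $D^{-1}\dot{C}(t_{0})D^{-1}$ and sets $P=D^{-1}QD$; then $P^{T}C(t_{0})P=DQ^{T}QD=D^{2}$ while $P^{T}\dot{C}(t_{0})P=D\,\mathrm{diag}(\alpha_{1},\ldots,\alpha_{n})\,D$ is diagonal --- the classical simultaneous diagonalization of a positive-definite form and a symmetric form by one congruence. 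Since the conclusion is only a \emph{local} isometry, $P$ is implemented as a linear change of the canonical coordinates of the first kind rather than as an element of $\mathrm{Aut}(\mathfrak{g})$, and the ODE-uniqueness step then closes the proof. To be fair to you, your instinct to stay inside $\mathrm{Aut}(\mathfrak{g})$ --- so that the structure constants, and hence the consistent truncation, are untouched --- is the one respect in which your plan is more careful than the paper, whose $P$ is in general not an automorphism; but as submitted your proposal does not prove the theorem, because its central step is a plan rather than an argument.
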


\begin{proof}
First, we fix an initial time $t_{0} \in I$. 
When we use canonical coordinates of the first kind $\{x_{1}, \ldots, x_{n}\}$, the Lorentzian metric is 
\begin{equation*}
(g_{M})_{(t_{0},e)} = -{N(t_{0})}^2 dt^{2} + \sum_{1 \leq i,j \leq n} 
s_{ij}(t_{0}) \, (dx_{i})_{e} \otimes (dx_{j})_{e} 
\end{equation*}
at the point $(t_{0}, e) \in I \times G$. 
Since the generators of $\mathfrak{g}$ satisfies the above condition (\ref{diag_cond}), 
we have 
\begin{equation*}
    s_{ij}(t_{0}) 
    = \lambda_{i}^{2} \delta_{ij} \quad (1 \leq i,j \leq n)
\end{equation*}
by using the transformation (\ref{trans}). 
When we consider 
\begin{equation*}
    C(t) = [s_{ij}(t)]_{1 \leq i,j \leq n} \in \textrm{Sym}_{n}^{+}\mathbb{R}
\end{equation*}
by using the curve (\ref{met_curv}), 
and let $D = \textrm{diag}(\lambda_{1}, \ldots, \lambda_{n})$, 
then we have $C(t_{0}) = D^{2}$. 
Since $D^{-1}\dot{C}(t_{0})D^{-1}$ is a symmetric matrix, 
there exists an orthogonal matrix $Q$ such that 
\begin{equation*}
    Q^{T} D^{-1} \dot{C}(t_{0}) D^{-1} Q 
    = \textrm{diag}(\alpha_{1}, \ldots, \alpha_{n}), 
\end{equation*}
where we remark $\dot{C} = \frac{dC}{dt}$. 
So letting $P = [p_{j}^{i}] = D^{-1} Q D$ and 
replacing the coordinates 
\begin{equation*}
(y_{1}, \ldots, y_{n}) = (x_{1}, \ldots, x_{n})P, 
\end{equation*}
then we have by using the coordinates $\{y_{1}, \ldots, y_{n}\}$ 
\begin{equation*}
(\dot{g}_{M})_{(t_{0}, e)} = - 2N(t_{0})\dot{N}(t_{0}) dt^{2} + 
\sum_{i,j=1}^{n} \lambda^{2}_{i} \alpha_{i} \delta_{ij} (dy_{i})_{e} \otimes (dy_{j})_{e}, 
\end{equation*}
that is, 
\begin{equation*}
g_{G}(t_{0})_{e} = 
\sum_{i=1}^{n} \lambda^{2}_{i}(dy_{i})^{2}_{e} , \quad 
\dot{g}_{G}(t_{0})_{e} = 
\sum_{i=1}^{n} \lambda^{2}_{i}\alpha_{i}(dy_{i})^{2}_{e}. 
\end{equation*}
Since we suppose the spatially homogeneous spacetime satisfies the Ricci-flat condition $\textrm{Ric} = 0$, 
by the uniqueness of initial value problems of ODE,
we may let the spatially homogeneous spacetime be locally a diagonal one.
\end{proof}

For a Lie group $G$, if the dimension of the moduli space is zero, 
there exists a set of generators $\{X_{1}, \ldots, X_{n}\}$ such that 
for any positive-definite symmetric matrix $S \in \textrm{Sym}_{n}^{+}\mathbb{R}$, 
there exists an automorphism $\Phi \in \textrm{Aut}(\mathfrak{g})$ such that 
\begin{equation*}
    \Phi^{T} S \Phi = \textrm{diag}(\lambda^{2}, \ldots, \lambda^{2}), 
\end{equation*}
where $\lambda$ denotes a positive number. 
See \cite[Section~3]{MR2783396} in detail. 
Namely, the Lie group whose moduli space is zero-dimensional satisfies the condition (\ref{diag_cond}).

\section{Expansion and contraction of space}
\label{s:exp}
In this section, we discuss the behavior of the scale factors in the solution (\ref{eq:ds2}). 
In particular, we show that the space cannot expand or contract in all dimensions in the limit $t\to \infty$. 
We choose initial conditions as $a_i(0)=c_i=1$ ($i=0,1,\dots,n$) in this section; it can be done by the isometry and rescaling.

From the solution (\ref{eq:ds2}) and the definition $H_i=\dot{a}_i/a_i$, we obtain
\begin{align}
H_1(\infty) &= h_1-\frac{1}{2}, \\
H_2(\infty) &= h_2+\frac{1}{2},\\
H_i(\infty) &= h_i \quad (3 \leq i \leq n-1),\\
H_n(\infty) &= h_n +\frac{1}{2}, 
\end{align}
where $H_{j}(\infty) := \lim_{t \to \infty} H_{j}(t)$ 
($1 \leq j \leq n$). 
The constraints on the initial condition are 
\begin{align}
&2h_1 =-\sum_{i=3}^{n-1}h_i, \label{eq:h2h1}\\
&(\tr h)^2-\tr(h^2) =\frac{1}{2}. \label{eq:trh}
\end{align}

\begin{prop} 
For the solution {\rm (\ref{eq:ds2})}, $H_i(\infty)\ (i=1,\dots, n)$ cannot be all positive or all negative.
\end{prop}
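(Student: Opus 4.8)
The plan is to split into the two cases---all $H_i(\infty)$ positive, and all negative---and argue each by contradiction, feeding the late-time values into the two initial-condition constraints \eqref{eq:h2h1} and \eqref{eq:trh}. The all-positive case should be cheap, using only the linear constraint; the all-negative case is where the quadratic constraint \eqref{eq:trh} must do real work.

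First, for the all-positive case I would use only \eqref{eq:h2h1}. Rewriting it in terms of the limit values via $H_1(\infty)=h_1-\hf$ and $H_i(\infty)=h_i$ $(3\le i\le n-1)$, it becomes
\[
2H_1(\infty)+\sum_{i=3}^{n-1}H_i(\infty)=-1 .
\]
If every $H_i(\infty)$ were positive the left-hand side would be positive (for $n=3$ the sum is empty and the relation reads $2H_1(\infty)=-1$, forcing $H_1(\infty)=-\hf<0$), which contradicts the value $-1$. So the all-positive case dies immediately.

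The all-negative case is the real work, and here I would bring in \eqref{eq:trh}, first rewritten as $\sum_{i<j}h_ih_j=\tfrac14$. Translating the sign conditions $H_i(\infty)<0$ back to the $h_i$ gives $h_2<-\hf$, $h_n<-\hf$, $h_i<0$ for the middle indices $3\le i\le n-1$, and $h_1<\hf$. Two facts then fall out of \eqref{eq:h2h1}: that $h_1=-\hf\sum_{i=3}^{n-1}h_i\ge 0$, and---this is the point---combining $h_1<\hf$ with it shows $\sum_{i=3}^{n-1}|h_i|<1$. The key algebraic step is to substitute $h_1=-\hf\sum_{i=3}^{n-1}h_i$ into $\sum_{i<j}h_ih_j=\tfrac14$; after the cross terms involving $h_1$ cancel, the constraint collapses to
\[
\frac{P}{2}\,(h_2+h_n)+h_2h_n-\frac12\sum_{i=3}^{n-1}h_i^2=\frac14,\qquad P:=\sum_{i=3}^{n-1}h_i .
\]
Writing $h_i=-\delta_i$ $(\delta_i>0)$ for the middle indices and $h_2=-\beta,\ h_n=-\gamma$ with $\beta,\gamma>\hf$, the left-hand side equals $\tfrac12\big(\sum\delta_i\big)(\beta+\gamma)+\beta\gamma-\tfrac12\sum\delta_i^2$. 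Using $\beta+\gamma>1$ together with the bound $\sum\delta_i^2\le\big(\sum\delta_i\big)^2<\sum\delta_i$ (the last inequality because $\sum\delta_i<1$) forces the first and third terms to have positive sum, while $\beta\gamma>\tfrac14$; hence the left-hand side exceeds $\tfrac14$, contradicting the equality. For $n=3$ all $\delta_i$ are absent and the left-hand side is simply $\beta\gamma>\tfrac14$.

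I expect the main obstacle to be the all-negative case, and within it the recognition that $H_1(\infty)<0$ is exactly what pins the middle sum $\sum|h_i|$ below $1$; without that bound the term $-\tfrac12\sum\delta_i^2$ could in principle pull the left-hand side below $\tfrac14$, so the inequality $\sum\delta_i^2<\sum\delta_i$ is what makes the estimate close. The cancellation of the $h_1$ cross-terms after substituting the linear relation is the other place where a careful bookkeeping of the pair sum $\sum_{i<j}h_ih_j$ is needed.
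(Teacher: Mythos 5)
Your proof is correct and takes essentially the same route as the paper's: both handle the all-positive case with only the linear constraint \eqref{eq:h2h1}, and the all-negative case by translating the sign conditions into $h_2,h_n<-\hf$, $h_i<0$, $0\le h_1<\hf$, and then using the same-sign inequality $\sum_{i=3}^{n-1}h_i^2\le\bigl(\sum_{i=3}^{n-1}h_i\bigr)^2$ to show the quadratic constraint \eqref{eq:trh} is violated. Your pair-sum rewriting $\sum_{i<j}h_ih_j=\tfrac14$ and substitution of $h_1=-\tfrac{P}{2}$ is just a reorganization of the paper's direct lower bound on $(\tr h)^2-\tr(h^2)$ (your version even treats the $n=3,4$ borderline cases slightly more carefully by keeping the inequalities non-strict where they can be equalities).
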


\begin{proof}
We prove it by contradiction. First, we show that all $H_i(\infty)$ ($i=1,\dots,n$) cannot be positive. If they can,  $h_i>0$ for $i=3,\dots n-1$. 
The right-hand side of the constraint (\ref{eq:h2h1}) is negative, then  $H_1(\infty)<h_1<0$. 
Thus, all dimensions cannot expand.

Next, we show that all $H_i(\infty)$ ($i=1,\dots,n$) cannot be negative.
If they can, $h_2<-1/2$,  $h_i<0\ (i=3,\dots n-1)$, $h_n<-1/2$. These are all negative and also by using (\ref{eq:h2h1}), we obtain
\begin{align*} 
\sum_{i=3}^{n-1}h_i^2 &< \left(\sum_{i=3}^{n-1}h_i\right)^2 \\
&= 4h_1^2.
\end{align*}
The left-hand side of (\ref{eq:trh}) is
\begin{align*} 
(\tr h)^2-\tr(h^2) 
&= (-h_1+h_2+h_n)^2 - \sum_{i=1}^{n}h_i^2\\
&> (-h_1+h_2+h_n)^2 - (5h_1^2+h_2^2+h_n^2)\\
&= -4h_1^2+2(h_1(-h_2-h_n)+h_2 h_n)\\
&> -4h_1^2+2h_1+\frac{1}{2}\\
&> \frac{1}{2}.
\end{align*}
We have used \eqref{eq:h2h1} for the first equality and $h_2,h_n<-1/2$ for the second inequality. For the last inequality, we have used the condition for contraction $h_1<1/2$ and $h_1>0$ from (\ref{eq:h2h1}). The above inequality contradicts the constraint (\ref{eq:trh}).
Thus, all dimensions cannot contract.
\end{proof}

For cosmology, solutions that only three dimensions equally expand are interesting. As an example of the $n=9$ case (the dimension of superstring theories), there is such a solution of \eqref{eq:h2h1} and \eqref{eq:trh}:
\begin{align}
\begin{split}
h_1&=0,\ h_2=h_{9}=-1,\\
h_3&=h_4=h_5=\frac{1}{2},\ h_6=h_7=h_8=-\frac{1}{2}.\\
\end{split}   
\label{eq:h9}
\end{align}
We plot the time evolution of $a_i$ in Figure~\ref{fig:a9}.
Such a solution can be found for the other dimensional cases, including $n=10$ (the dimension of the M-theory).
One example is
\begin{align}
\begin{split}
h_1&=0,\ h_2=h_{10}=-\frac{13}{2},\\
h_3&=h_4=h_5=4,\ h_6=h_7=h_8=h_9=-3.
\end{split}   
\label{eq:h11}
\end{align}
We plot the time evolution of $a_i$ in Figure~\ref{fig:a10}.
In these examples, three dimensions ($a_3,a_4,a_5$) equally expand, and the other dimensions contract.

\begin{figure}[htb]
    \centering
    \includegraphics[width=13cm]{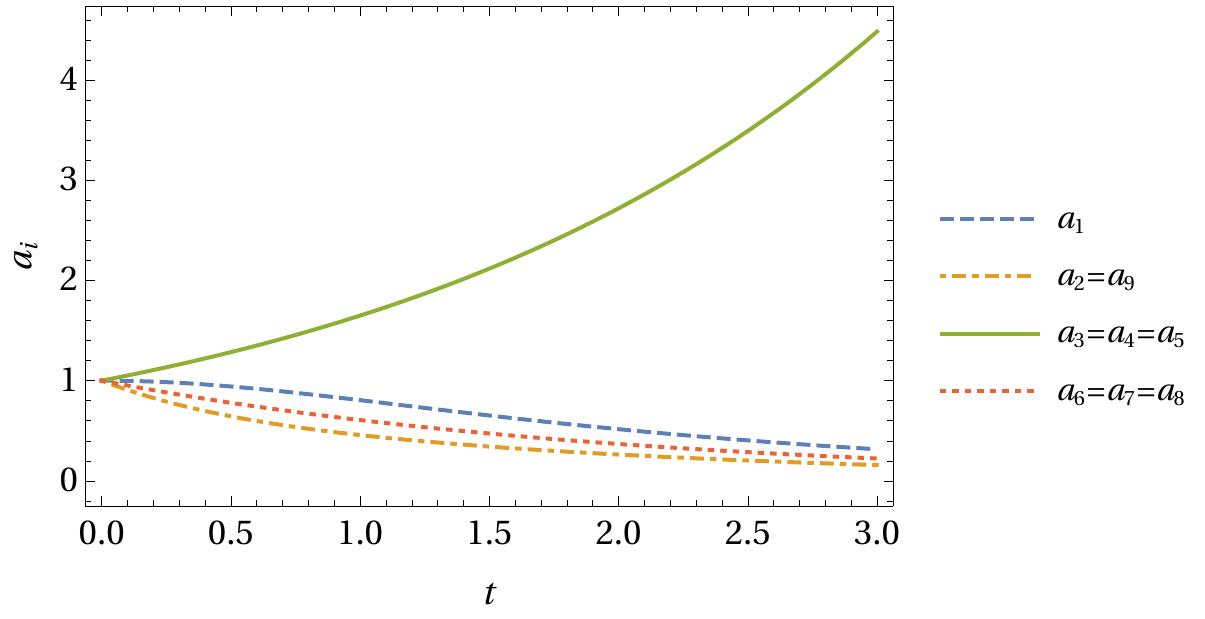}
    \caption{
    Time evolution of the scale factors $a_i(t)$ of the solution (\ref{eq:ds2}) with the spatial dimension number $n=9$. 
    The initial conditions are chosen as $a_i(0) = 1$ 
    and (\ref{eq:h9}).}
    \label{fig:a9}
\end{figure}

\begin{figure}[htb]
    \centering
    \includegraphics[width=13cm]{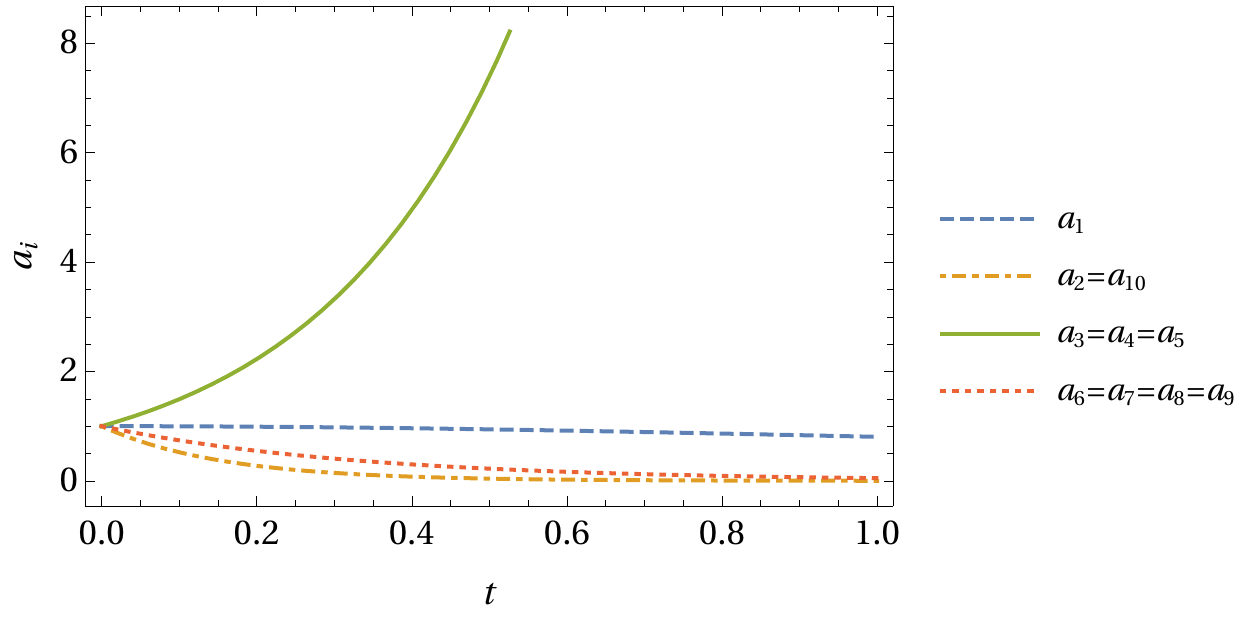}
    \caption{
    Time evolution of scale factors $a_i(t)$ of the solution (\ref{eq:ds2}) with the spatial dimension number $n=10$. 
    The initial conditions are chosen as $a_i(0) = 1$ 
    and (\ref{eq:h11}).}
    \label{fig:a10}
\end{figure}

\section*{Acknowledgments}
The authors would like to thank Tomoya Nakamura for useful discussions and comments. 

\appendix
\section{New solutions for other generalizations of Bianchi type II}
\label{s:oth}

By applying the method of deriving the solution in Subsection~\ref{ss:aj02}, Einstein equations for other spatially homogeneous cases can be solved. As examples, we generalize the Bianchi spacetimes of type II in two different ways below. The first case is that $A$ consists of only off-diagonal components $J_{2}(0)$; in the second case, $A$  consists of  $J_{2}(0)$ and diagonal components. 
When diagonal components are zeros, 
the latter case is reduced to the one off-diagonal component case $J_{2}(0) \oplus O_{n-3}$ in Subsection~\ref{ss:aj02}. 

We assume that metrics are diagonal. We do not claim that the diagonal solutions in this appendix are general.

\subsection{Block-diagonal case} 
\label{subs:1off}

We consider odd $n\geq 3$, and
\begin{align} 
A= \underbrace{J_{2}(0) \oplus \cdots \oplus J_{2}(0)}_{\frac{n-1}{2}} 
= 
\begin{bmatrix}
0 & 1 &&&&&\\
0&0&&&&&\\
&&0&1&&&\\
&&0&0&&&\\
&&&&\ddots&&\\
&&&&&0&1\\
&&&&&0&0
\end{bmatrix}.
\end{align}

The Ricci-flat conditions are: 
\begin{align} 
\dot{H}_{2p-1}&=-\frac{N^2}{2 a_n^2}\frac{a_{2p-1}^2 }{a_{2p}^2}, \label{eq:h2p1}\\
\dot{H}_{2p}&=\frac{N^2}{2 a_n^2}\frac{a_{2p-1}^2}{a_{2p}^2}, \label{eq:h2p}\\
\dot{H}_n&=\frac{N^2}{2 a_n^2}\sum_{k=1}^{\frac{n-1}{2}}\frac{a_{2p-1}^2}{a_{2p}^2}, \label{eq:hnd0}
\end{align}
where $p=1,\dots,(n-1)/2$. 
From the general formula \eqref{eome}, the left-invariant forms are:
\begin{equation}
\begin{split}
\omega^{2p-1}&=dx_{2p-1}-x_n dx_{2p}, \\
\omega^{2p}&=dx_{2p}\quad \left(p=1,\dots \frac{n-1}{2}\right),\\
\omega^n &= dx_n.    
\end{split}
\end{equation}
The solution of the Einstein equations is: 
\begin{equation}
\begin{split}
ds^2=&\cosh^{\frac{n-1}{2}}(kt)(c_0^2 e^{2(\tr h)t}dt^2 +c_n^2 e^{2h_n t}(\omega^n)^2)\\
&+\sum_{p=1}^{\frac{n-1}{2}}\left[\frac{c_{2p-1}^2e^{2h_{2p-1}t}}{\cosh kt}(\omega^{2p-1})^2+c_{2p}^2e^{2h_{2p}t}\cosh(kt)(\omega^{2p})^2\right],\\
k=&\frac{c_0 c_1}{c_2c_n},\ 
\frac{c_1}{c_2}=\frac{c_3}{c_4}=\dots=\frac{c_{n-2}}{c_{n-1}},\ 
(\text{tr}h)^2-\text{tr}(h^2)=\frac{n-1}{4}k^2.
\end{split}
\label{eq:ds2c}
\end{equation}
In the $n=3$ case, it is equivalent to the known solution \cite[(13.55)]{MR2003646}.

\subsection{One off-diagonal and diagonal components}

We consider
\begin{align} 
A= J_{2}(0) \oplus \textrm{diag}(A_{3}, \ldots, A_{n-1}) = 
\begin{bmatrix}
0 & 1 &&&\\
0&0&&&\\
&&A_{3}&&\\
&&&\ddots&\\
&&&&A_{n-1}
\end{bmatrix}.
\end{align}
For such $A$, $\text{Ric}(X_0,X_n)$ and $\text{Ric}(X_1,X_2)$ is nonzero. Then there are two more constraints compared to the previous cases.
The Einstein equations are
\begin{align}
\dot{H_1} &=-\frac{N^2}{2a_n^2}\frac{a_1^{2}}{a_2^2}, \label{eq:h1d}\\
\dot{H_2} &=\frac{N^2}{2a_n^2}\frac{a_1^{2}}{a_2^2}, \label{eq:h2d}\\
\dot{H_i}&=0\quad (i=3,\dots, n-1),  \label{eq:hi}\\
\dot{H_n}&=\frac{N^2}{2a_n^2}\left[2\tr (A^2) +\frac{a_1^{2}}{a_2^2}\right] \label{eq:hn2},\\
0&=\frac{N^2\tr A}{2a_n^2}, \label{eq:0n2}\\
0&=\sum_{i=3}^{n-1}A_{i} H_i - (n-1)H_n \text{tr}A. \label{eq:0i3}
\end{align}
In a similar way as previous subsections, we can obtain $a_i(t)$. By substituting $a_i(t)$ into the condition (\ref{eq:r0}), we obtain a constraint on $h_i$. The additional equations (\ref{eq:0n2}) and (\ref{eq:0i3}) also give constraints on the parameters $h_i$. 
From the general formula \eqref{eome}, the left-invariant forms are:
\begin{align*} 
&\omega^1=dx_1-x_n dx_2,\\
&\omega^2=dx_2,\\
&\omega^i= e^{-A_ix_n}dx_i \quad (i=3,\dots n-1),\\
&\omega^n= dx_n.
\end{align*}
The solution of the Einstein equations is: 
\begin{equation}
\begin{split}
ds^2=&f(t)\cosh(kt)\left[-c_0^2 e^{2(\tr h)t}dt^2 +c_n^2 e^{2h_n t}(\omega^n)^2\right]\\
&+\frac{c_1^2e^{2h_1 t}}{\cosh(kt)}(\omega^1)^2+c_2^2e^{2h_2 t}\cosh(kt)(\omega^2)^2+\sum_{i=3}^{n-1} c_i^2e^{2h_i t}(\omega^i)^2,\\
f(t) :=&  \exp\left\{\frac{\text{tr}(A^2)}{(h_2-h_1)^2}\left[\frac{c_0^2}{2c_n^2}(e^{2(h_2-h_1)t}-1)-(h_2-h_1)t\right]\right\}, \\
k=&\frac{c_0 c_1}{c_2c_n},\  
\text{tr} A= 0,\ 
\sum_{i=3}^{n-1} h_i A_{i} = 0,\\
h_1 =& -\hf\sum_{i=3}^{n-1}h_i,\ 
(\text{tr}h)^2-\text{tr}(h^2)= \frac{c_0^2}{c_n^2}\text{tr}(A^2)+\frac{k^2}{2}.
\end{split}
\label{eq:ds2e}
\end{equation}
The condition $h_1\neq h_2$ is included in the conditions in the last line. It can be shown as follows. If $h_1=h_2$, the left equation of the last line implies $0=\sum_{i=1}^{n-1}h_i$ then $\tr h = h_n$, and $(\text{tr}h)^2-\text{tr}(h^2)<0$. It contradicts the last equation.

As a special case $A_{i}=0\ (i=3,\dots,n-1)$, this solution includes the solution~(\ref{eq:ds2}).

\bibliographystyle{utphys}
\bibliography{ref}

\end{document}